\documentclass{article}

\usepackage{microtype}
\usepackage{graphicx}
\usepackage{subcaption}
\usepackage{booktabs} 

\usepackage{hyperref}

\usepackage[accepted]{icml2026}

\usepackage{amsmath}
\usepackage{amssymb}
\usepackage{mathtools}
\usepackage{amsthm}
\newcounter{remarkcounter}

\usepackage[capitalize,noabbrev]{cleveref}

\newcommand{\E}{\mathbb{E}}
\newcommand{\Prb}{\mathbb{P}}
\newcommand{\OPT}{\mathrm{OPT}}

\theoremstyle{plain}
\newtheorem{theorem}{Theorem}[section]
\newtheorem{proposition}[theorem]{Proposition}
\newtheorem{lemma}[theorem]{Lemma}
\newtheorem{corollary}[theorem]{Corollary}
\theoremstyle{definition}
\newtheorem{definition}[theorem]{Definition}

\theoremstyle{remark}
\newtheorem{remark}[remarkcounter]{Remark}
\usepackage[textsize=tiny]{todonotes}

\icmltitlerunning{Multi-Agent Combinatorial-Multi-Armed-Bandit framework for the Submodular Welfare Problem under Bandit Feedback}

\begin{document}

\twocolumn[
  \icmltitle{Multi-Agent Combinatorial-Multi-Armed-Bandit framework for the Submodular Welfare Problem under Bandit Feedback}



  



  \begin{icmlauthorlist}
    \icmlauthor{Subham Pokhriyal}{a}
    \icmlauthor{Shweta Jain}{a}
    \icmlauthor{Vaneet Aggarwal}{b}
  \end{icmlauthorlist}

  \icmlaffiliation{a}{Indian Institute of Technology Ropar, Rupnagar, India}
  \icmlaffiliation{b}{Purdue University, West Lafayette, USA}

  \icmlcorrespondingauthor{Subham Pokhriyal}{subham.22csz0002@iitrpr.ac.in}
  \icmlcorrespondingauthor{Shweta Jain}{shwetajain@iitrpr.ac.in}
  \icmlcorrespondingauthor{Vaneet Aggarwal}{vaneet@purdue.edu}

  \icmlkeywords{Machine Learning, ICML}

  \vskip 0.3in
]




\printAffiliationsAndNotice{} 


\begin{abstract}
 We study the \emph{Submodular Welfare Problem} (SWP), where items are partitioned among agents with monotone submodular utilities to maximize the total welfare under \emph{bandit feedback}. Classical SWP assumes full value-oracle access, achieving $(1-1/e)$ approximations via continuous-greedy algorithms. We extend this to a \emph{multi-agent combinatorial bandit} framework (\textsc{MA-CMAB}), where actions are partitions under full-bandit feedback with non-communicating agents. Unlike prior single-agent or separable multi-agent CMAB models, our setting couples agents through shared allocation constraints. We propose an explore--then--commit strategy with randomized assignments, achieving $\tilde{\mathcal{O}}(T^{2/3})$ regret against a $(1-1/e)$ benchmark—the first such guarantee for partition-based submodular welfare problem under bandit feedback.

\end{abstract}
\section{Introduction}

In combinatorial multi-armed bandit (CMAB) problems, a learner selects a subset of base arms at each round and receives stochastic feedback. The challenge stems from the structured yet exponentially large action space, which makes exploration and optimization nontrivial. In CMAB, two types of feedback are typically studied. With semi-bandit feedback, the learner observes the individual reward contribution of each chosen arm in the selected subset. This allows for efficient estimation and often leads to faster learning. However, in the full-bandit feedback setting, only the total reward from the chosen set is observed, without any information about individual arm contributions. This makes learning significantly harder, as marginal values must be inferred indirectly, often requiring deliberate and suboptimal exploration.

Despite its challenges, full-bandit feedback more accurately reflects many real-world scenarios where per-arm signals are inaccessible due to privacy concerns, measurement limits, or system constraints. Applications such as recommender systems, data summarization, fair allocation, revenue and influence maximization frequently operate in such regimes, where only aggregate outcomes are observable~\cite{fourati2023randomized,nie2023framework,fourati2024federated,pokhriyal2025anytime}. This motivates the need for learning algorithms that can operate effectively in the full-bandit setting.

Our work addresses a multi-agent variant of this problem, grounded in the well-studied Submodular Welfare Problem (SWP). In an offline SWP, the objective is to divide a set of items among multiple agents (partitions) so as to maximize the total (utilitarian) utility, where each agent's utility is defined by a monotone submodular function. This problem arises in domains such as economics, fair division, and resource allocation. In the value-oracle model, the problem admits a $1/2$-approximation via a greedy algorithm~\cite{fisher1978analysis,lehmann2001combinatorial}, while more advanced techniques (such as the continuous-greedy algorithm with pipage or randomized rounding) achieve the optimal $(1 - 1/e)$ approximation~\cite{vondrak2008optimal}. The SWP reduces to submodular maximization under a partition matroid.

Online submodular optimization under bandit feedback has gained considerable attention in recent literature, particularly in single-agent CMAB settings. For non-monotone objectives, \citet{fourati2023randomized} adapts the \emph{RANDOMIZED-SUM} algorithm from \citet{buchbinder2015tight}, establishing a sublinear $1/2$-regret. In the monotone case under cardinality constraints, \citet{nie2023framework} builds on the classical \emph{GREEDY} algorithm of \citet{nemhauser1978analysis}, while \citet{fourati2024federated} extends the \emph{STOCHASTIC-GREEDY} approach of \citet{mirzasoleiman2015lazier}, both of which yield a sublinear $(1 - 1/e)$-regret. 

However, all these works are limited to a single agent setting. Extensions to multi-agent scenarios have been proposed in federated and decentralized optimization contexts~\cite{konevcny2016federated,mcmahan2017communication,li2020federated,hosseinalipour2021federated,elgabli2022fednew,fourati2024filfl}, but these typically assume continuous updates and communication between agents. In contrast, we focus on discrete, non-communicating, multi-agent settings where global utility depends on structured item allocations, and feedback is limited to total reward.
Practical applications of this framework include partitioned recommendation, distributed sensing, influence maximization across user groups, and equitable allocation of indivisible goods.

In this work we formalize a connection between the offline SWP \cite{vondrak2008optimal} and the online multi-agent CMAB framework, with a particular focus on settings involving full-bandit feedback. In the online variant, the learner repeatedly selects a feasible partition of items among agents, receives aggregate reward feedback, and aims to minimize regret with respect to a benchmark offline solution.
\noindent \emph{Our formulation departs from prior work in four key aspects:} 
(i) all agents are present and active in every round (no partial participation), (ii) agents are non-communicating and for each agent $i$  we have distinct submodular utility  $w_i(\cdot)$, (iii) the learner selects a \emph{partition} of the item set, assigning disjoint subsets to agents, rather than a single best subset, and (iv) the optimization objective is the total welfare $\sum_i w_i(\cdot)$, coupling feasibility across agents. 



To the best of our knowledge, this is the first work to cast \emph{discrete partitioning for submodular welfare} as a multi-agent CMAB problem under full-bandit feedback. We design a learning algorithm that lifts offline approximation guarantees to the online setting and achieves a regret bound of $\tilde{\mathcal{O}}(T^{2/3})$ using discrete randomized allocation, adaptive exploration, and full-bandit feedback. We outline the principal contributions of this work as follows:
\begin{itemize}
  \item We introduce a non-communicating multi-agent partitioning framework, where each agent receives a distinct (possibly monotone) submodular reward and only aggregate outcomes are observed.
  \item We establish new \emph{resilience guarantees} for the offline submodular welfare problem under noisy value oracles, applicable to partition/matroid/knapsack-style constraints.
  \item We propose an offline-to-online reduction that achieves $\tilde{\mathcal{O}}(T^{2/3})$ regret with respect to an $\alpha$-approximate benchmark under standard noise models (bounded or sub-Gaussian).
  \item We design a discrete, randomized assignment policy that approximates the continuous-greedy trajectory via adaptive sampling and deliberate exploration based only on total rewards.
\end{itemize}

\section{Related Work}

\begin{table*}[t]
\centering
\caption{Summary of Resilience Parameters and Regret Guarantee for MA-CMAB. This table outlines the key components underlying the regret analysis of the MA-CMAB algorithm for online submodular welfare maximization. It includes the approximation factor $\alpha = 1 - 1/e$, the robustness parameter $\delta$ capturing noise sensitivity, and the oracle complexity $\eta$. The framework achieves a sublinear $\mathcal{O}(T^{2/3})$ regret bound with respect to the best offline allocation, under stochastic full-bandit feedback.}
\label{tab:summaryofresults}
\begin{tabular}{c c c | c c c}
\hline
\textbf{App.} & \textbf{Objective $f$} & \boldmath$\alpha$ & \boldmath$\delta$ & \boldmath$\eta$ & \textbf{Regret Bound} \\
\hline
USW & $\max\texttt{(Sub)}$ & $1 - \frac{1}{e}$ 
& $(4MN + 2M)$ 
& $(MN)^8$ 
& $\mathcal{O}\!\left(\delta^{2/3} \eta^{1/3} C^{2/3} T^{2/3} \log^{1/3}(T)\right)$ \\
\hline
\end{tabular}
\end{table*}

In this section, we review related work relevant to the construction of an online multi-agent CMAB framework for the offline Submodular Welfare Problem (SWP), organized into three key areas:
(1) Submodular Welfare Optimization – studies item allocation for maximizing welfare with approximation guarantees under value and demand oracle models.
(2) Multi-agent MAB (MA-MAB) – explores decentralized learning with limited communication and agent-specific rewards.
(3) Combinatorial Bandit Frameworks – adapts offline submodular algorithms to online settings, recently extended to multi-agent cases.
These lines of work collectively motivate our decentralized, non-communicating, online approach to SWP.

\subsection{Submodular Welfare Problem}

The Submodular Welfare Problem (SWP) arises naturally in combinatorial auctions, where multiple agents compete for a set of items and derive value from combinations of them. Each agent is associated with a private utility function that evaluates the value of any subset of items. The objective is to allocate items among agents in a way that maximizes the overall (total) welfare. Since the input space is exponential in size, algorithmic access is typically assumed to be restricted to oracle (query) access to the valuation functions.

Two standard oracles are commonly used. The \emph{value oracle model} allows the algorithm to evaluate the utility of particular item sets via direct queries. In the more expressive \emph{demand oracle model}, given item prices, the oracle returns the player’s most preferred bundle under those prices. These models allow optimization over general submodular functions without assuming explicit representation.

In the value oracle setting, a greedy algorithm provides a $1/2$-approximation~\cite{lehmann2001combinatorial}, while ~\cite{vondrak2008optimal} later achieved an optimal approximation ratio of $(1-1/e)$ using the continuous greedy technique and pipage rounding. In the demand oracle model, the same $(1-1/e+\varepsilon')$ approximation factor can be achieved using oblivious rounding procedures~\cite{feige2010submodular}. However, improving beyond this threshold is NP-hard~\cite{khot2005inapproximability}, and even matching it with only value queries requires exponentially many calls~\cite{mirrokni2008tight}.


In particular, SWP is a special case of maximizing a monotone submodular function subject to matroid constraints, first studied by~\cite{fisher1978analysis}, with reduction formalized in~\cite{lehmann2001combinatorial}. In the uniform matroid case, where only $k$ items can be selected, the greedy algorithm achieves $(1 - 1/e)$~\cite{nemhauser1978analysis}. For general matroids, it provides only a $1/2$-approximation. A major breakthrough was achieved by \citet{calinescu2007maximizing}, who proposed the continuous greedy algorithm. This method operates on a fractional relaxation of the submodular function’s multilinear extension and subsequently applies pipage rounding~\cite{ageev2004pipage} to recover a feasible integral solution without compromising the approximation guarantee. This method achieves the tight $(1 - 1/e)$ bound in the value oracle model for general matroids. In this work, we adopt continuous greedy value-oracle benchmark  of~\cite{vondrak2008optimal} as the offline reference for our stochastic variant of the SWP in a Multi-Agent Combinatorial Multi-Arm Bandit (MA-CMAB) framework.

\subsection{Multi-agent Multi-armed Bandits}
Motivated by the single-agent bandit setting \cite{pmlr-v108-chawla20a}, ~\cite{chawla2023collaborative} proposed decentralized algorithms that facilitate agent collaboration through selective information sharing. Recent works in multi-agent multi-armed bandits (MA-MAB) consider various communication-based models, including broadcasting \cite{agarwal2022multi,buccapatnam2015information,chakraborty2017coordinated}, pairwise gossip \cite{pmlr-v108-chawla20a,sankararaman2019social}, and decentralized settings with frequent interactions \cite{kolla2018collaborative,lalitha2021bayesian,martinez2019decentralized}. Contextual and linear-reward MA-MAB variants have also been explored \cite{dubey2020kernel,tekin2015distributed,dubey2020differentially,chawla2022multiagent,korda2016distributed}. A few works target minimizing simple regret instead of cumulative regret \cite{hillel2013distributed,szorenyi2013gossip}, and others allow heterogeneous bandits across agents \cite{bistritz2018distributed,shahrampour2017multi,reda2022near,zhu2021federated}. Additionally, collision-based models have received attention \cite{anandkumar2011distributed,avner2014concurrent,bistritz2018distributed,boursier2019sic,dakdouk2021collaborative,kalathil2014decentralized,liu2010distributed,liu2020competing,mansour2017competing,rosenski2016multi}.

Our setting differs from collision-based MA-MAB models, as there are no collisions among arms, each agent operates an independent bandit with its own reward distribution. Consequently, the reward obtained by different agents for pulling the same arm may differ, which fundamentally distinguishes our framework from traditional collision-based settings.
However, agents still compete with each other as the allocation of items comprises a partition. 

\subsection{Single/Multi-agent CMAB Framework}
Combinatorial bandit frameworks in the single-agent setting have been extensively studied. ~\citet{niazadeh2021online} introduced a general framework for adapting iterative greedy algorithms from offline settings to adversarial combinatorial bandit problems, under the requirement that the offline algorithm exhibits an iterative greedy structure and a property known as Blackwell reproducibility. More recently, ~\cite{nie2023framework} introduced the C-ETC framework, which extends offline algorithms with robustness guarantees to the stochastic CMAB setting. This framework generalizes earlier approaches for submodular maximization~\cite{nie2022explore,fourati2023randomized}, though it does not subsume more recent work such as ~\cite{fourati2024combinatorial}, which offers refined guarantees for submodular bandits. These works collectively form the foundation of single-agent combinatorial bandit theory, particularly in scenarios involving robustness and offline-to-online reductions.

While the C-ETC framework~\cite{nie2023framework} provides an offline-to-online reduction for stochastic CMABs in the single-agent setting, it does not extend to recent advances in multi-agent learning. In contrast, ~\cite{fourati2024federated} propose a generalized framework that accommodates multiple communicating agents ($M \geq 1$), allowing coordination across disjoint combinatorial actions. This yields tighter regret guarantees by reducing exploration variance through distributed sampling and shared information.

Our setting differs from both single-agent and prior multi-agent CMAB frameworks in several key aspects. Unlike single-agent models, we consider multiple agents acting concurrently, each with their own distinct submodular utility function. All agents participate in every round, but they are non-communicating, making coordination more challenging. Unlike recent multi-agent approaches such as ~\cite{fourati2024federated}, which focus on selecting a single best subset shared among agents, our model requires computing a disjoint partition of items, i.e., one subset per agent with the goal of maximizing the total submodular welfare $\sum_i w_i(S_i)$. Building on the single-agent foundation of ~\cite{nie2023framework}, we extend to this decentralized multi-agent partitioning setting and provide online guarantees with sublinear regret subject to matroid constraints.

\section{Problem Statement}\label{sec:prob_state}
\paragraph{Model.}
We consider a fair division setting with $M$ agents $P = \{1, \dots, M\}$ and a finite collection of $N$ indivisible items, denoted by $Q$. The ground set of all possible agent-item assignments is defined as $X = P \times Q$, where an element $(i,j) \in X$ represents the assignment of item $j$ to agent $i$. A feasible allocation is a subset $S \subseteq X$ that satisfies allocation constraints specified by a \textit{partition matroid}, with optional per-agent quotas:
\[
\mathcal{D}
\;=\;
\Bigl\{\,S\subseteq X:\;
\forall j\in Q,\ |S\cap (P\times\{j\})|\le 1
\ \ \text{and}\ \ \]
\[
\forall i\in P,\ |S\cap(\{i\}\times Q)|\le b_i
\Bigr\},
\]
where $b_i \in \mathbb{Z}_{\ge 0}$ is the (optional) quota of agent $i$. Setting $b_i = N$ for all $i$ removes agent-side capacity constraints. Each feasible allocation $S \in \mathcal{D}$ induces a bundle $s_i = \{j \in Q : (i,j) \in S\}$ for each agent $i$, which together form a (possibly capped) partition of $Q$. Each agent $i$ is equipped with a \textit{monotone submodular valuation function} $w_i : 2^Q \to [0,1]$, with $w_i(\emptyset) = 0$. An allocation 
$S $ has (utilitarian) welfare defined as the total sum of the agents valuations:
\(
f(S)\;=\;\sum_{i=1}^{M} w_i(s_i)\;\in\; [0,1].
\)

\texttt{Example.}
The proposed model subsumes stochastic combinatorial auctions with noisy
valuation access; see Appendix~\ref{app:auction} for a detailed instantiation.
\paragraph{Offline Benchmark.}
Let $\OPT \in \arg\max_{A \in \mathcal{D}} f(A)$ denote an optimal allocation maximizing total welfare. Since exact maximization over $\mathcal{D}$ is NP-hard in general, we assume access to an $\alpha$-approximate polynomial-time offline algorithm $\mathcal{A}{lg}_{off}$ that returns a feasible solution $A_{\mathcal{A}{lg}} \in \mathcal{D}$ satisfying:
\[
f(A_{\mathcal{A}{lg}}) \;\ge\; \alpha \cdot f(\OPT), \quad \text{for some } \alpha \in (0,1].
\]
This offline guarantee serves as the benchmark for our online learning protocol.

\paragraph{Online Learning Protocol.}
Time proceeds in discrete rounds indexed by $t = 1, \dots, T$. At each round $t$, a learning policy selects a feasible allocation $A_t \in \mathcal{D}$, and receives a stochastic full-bandit reward:
\[
r_t = f_t(A_t),\quad\text{where } f_t(A_t) \in [0,1] \text{ and } \mathbb{E}[f_t(A_t)] = f(A_t).
\]
The reward sequence $\{f_t(A_t)\}_{t \ge 1}$ is assumed to be i.i.d. across rounds for each fixed allocation $A_{\mathcal{A}{lg}} \in \mathcal{D}$, and only the aggregate reward $f_t(A_t)$ is observed after each round— no agent or item level
feedback is available.
\paragraph{Regret for Submodular Welfare Problem.}
The goal is to minimize regret against the offline benchmark. Let $\mathcal{R}{(T)}$ denote the cumulative $\alpha$-regret after $T$ rounds, defined as:
\begin{equation}
\label{eq:alpha-regret-partition}
\mathbb{E}[\mathcal{R}{(T)}]
\;=\;
\alpha\,T\,f(\OPT)
\;-\;
\mathbb{E}\!\left[\sum_{t=1}^{T} f_t(A_t)\right].
\end{equation}
This is a specialization of the standard CMAB pseudo-regret to our setting of partitioned actions and submodular social welfare. A smaller $\mathbb{E}[\mathcal{R}{(T})]$ indicates better learning relative to the best achievable offline approximation.
\section{Resilience Guarantee for the Submodular Welfare Problem}

\paragraph{Continuous Greedy for Submodular Welfare.}
\citet{wolsey1982analysis} early work on real-valued submodular maximization over knapsack polytopes updates one coordinate at a time and attains only a $1/2$-approximation in that setting. In contrast, the \emph{continuous greedy} method of ~\cite{vondrak2008optimal} moves in a globally chosen ascent direction inside the matroid polytope and achieves a $(1-1/e)$-approximation; a discrete variant with pipage rounding yields integral solutions with the same expectation.

Let $P=\{1,\dots,M\}$ be the set of agents and $Q=\{1,\dots,N\}$ the set of items. Each agent $i\in P$ has a monotone submodular valuation $w_i:2^Q\to\mathbb{R}_+$. Define the ground set $X=P\times Q$ and, for any $S\subseteq X$, let $s_i=\{\,j\in Q:(i,j)\in S\,\}$ be agent $i$'s bundle. For $y\in[0,1]^X$, define a random vector $\hat y\in\{0,1\}^X$ by rounding each coordinate independently: set $\hat y_j=1$ with probability $y_j$ and $\hat y_j=0$ with probability $1-y_j$.
The welfare function is
\(
f(S)=\sum_{i=1}^M w_i(s_i),
\ \max_{S\in\mathcal{I}} f(S),
\)
subject to the partition matroid
\(
\mathcal{I}=\bigl\{\,S\subseteq X:\ \forall j\in Q,\ |\{\,i\in P:(i,j)\in S\,\}|\le 1\,\bigr\}.
\)
Its polytope is
\(
\mathcal{P}(\mathcal{M})=\Bigl\{\,y\in[0,1]^{M\times N}:\ \sum_{i=1}^M y_{ij}\le 1\ \ \forall j\in Q\Bigr\}.
\)
\begin{center}
\begin{algorithm}[t]
\caption{\textsc{Continuous Greedy (Partition Matroid)}}
\label{alg:continuous-greedy-boxed}
\begin{algorithmic}[1]
\REQUIRE \small Stepsize $\lambda \in (0,1]$ (e.g., $\lambda = \tfrac{1}{(MN)^2}$)
\STATE Initialize $t \gets 0$, and set $y_{ij}(0) = 0$ for all $i \in [M], j \in [N]$
\WHILE{$t < 1$}
    \FOR{each $i \in [M], j \in [N]$}
        \STATE Estimate marginal: 
        \[
        \omega_{ij}(t) \approx \mathbb{E}[w_i(R_i(t) \cup \{j\}) - w_i(R_i(t))]
        \]
        where $R_i(t)$ includes each $j' \in Q$ independently with probability $y_{ij'}(t)$
    \ENDFOR
    \FOR{each item $j$}
        \STATE Pick $i_j(t) \in \arg\max_i\ \omega_{ij}(t)$ and update:
        \[
        y_{i_j(t),j}(t+\lambda) \gets y_{i_j(t),j}(t) + \lambda
        \]
        \[
        y_{ij}(t+\lambda) \gets y_{ij}(t),\quad \forall i \ne i_j(t)
        \]
    \ENDFOR
    \STATE $t \gets t + \lambda$
\ENDWHILE
\STATE \textbf{Rounding:} Use pipage rounding to convert $y(1)$ to integral $S \in \mathcal{I}$, preserving $\mathbb{E}[f(S)] = F(y(1))$
\end{algorithmic}
\end{algorithm}
\end{center}
\allowdisplaybreaks
The multilinear extension of $f$ is
\begin{align}
\allowdisplaybreaks
F(y)
&:= \mathbb{E}\bigl[f(\widehat{y})\bigr] \notag 
= \sum_{R \subseteq X} f(R)
   \prod_{i \in R} y_i
   \prod_{j \notin R} (1-y_j) \notag \\
&= \sum_{i=1}^M \mathbb{E}\bigl[w_i(R_i)\bigr].
\end{align}
where for each agent $i$, $R_i\subseteq Q$ is formed by including each item $j$ independently with probability $y_{ij}$, and the collections $\{R_i\}_{i=1}^M$ are mutually independent. Equivalently, item $j$ is assigned to agent $i$ with probability $y_{ij}$ (and unassigned with probability $1-\sum_i y_{ij}$), which respects the partition constraint in expectation.

\noindent\textbf{Approximation Guarantees.}
By the analysis of~\cite{vondrak2008optimal}, the above \cref{alg:continuous-greedy-boxed} returns (after rounding) an allocation $S$ satisfying,
\(
\mathbb{E}[f(S)]\ \ge\ (1-1/e)\cdot f(\OPT),
\)
 where \(\OPT\) denotes the optimal feasible allocation.
\paragraph{Oracle complexity.}
Let $S$ denote the number of value-oracle samples used to estimate each $\omega_{ij}(t)$ at a time step. With step size $\lambda$, the number of iterations is $1/\lambda$, and each iteration touches $MN$ pairs, so the total value-oracle calls are
\[
\underbrace{\tfrac{1}{\lambda}}_{\text{iterations}}
\times
\underbrace{MN}_{\text{pairs}}
\times
\underbrace{Z}_{\text{samples per pair}}
\ =\ O\!\left(\tfrac{M\,N\,Z}{\lambda}\right).
\]

For the canonical discretization $\lambda=\tfrac{1}{(MN)^2}$ and a conservative choice $Z=(MN)^5$, this upper-bounds to $(MN)^8$ calls, matching the standard worst-case accounting used to secure the approximation guarantee in the value-oracle model.

\subsection{Offline Resilient Approximation}
In the Submodular Welfare Problem, where the objective is to maximize \( f(S) = \sum_i f(s_i) \), with \( S = \{s_1, s_2, \ldots, s_M\} \) being a disjoint partition of items among agents, the Continuous Greedy algorithm achieves the following guarantee:

\begin{definition}[$(\alpha, \delta, \eta)$- Resilient Approximation]\label{def:robustness}  
An offline algorithm \(\mathcal{A}{lg}\) is an \((\alpha, \delta, \eta)\)-resilient 
if, given an access to approximate oracle 
$\hat{f}$
satisfying \(|f(S) - \hat{f}(S)| < \epsilon\) for all \(S \in \Omega\), 
\(\mathcal{A}{lg}\) returns a solution \(S_{\mathcal{A}{lg}}\) such that:  
\begin{align}
    \mathbb{E}[f(S_{\mathcal{A}{lg}})] \geq \alpha f(\OPT) - \delta \epsilon,
\label{eq:def:resilience}
\end{align}  
where \(f(\OPT) = \arg\max_{S \subseteq X} f(S)=  \arg\max_{S \subseteq X}\sum_{i \in M}w_i(s_i)\). 

Here, \(\eta\) bounds the total number of oracle calls to \(\hat{f}\), and \(\delta\) quantifies resilience to approximation errors.  
In our setting, the oracle complexity parameter $\eta$ from
\cref{def:robustness} coincides with the number of distinct actions evaluated during
the exploration phase; we therefore use a single symbol $\eta$ for both
quantities.
\end{definition} 
\begin{remark}
The robustness notion of~\cite{niazadeh2021online} is tied to approximation algorithms with an \emph{iterative greedy} structure. In contrast, Definition~\ref{def:robustness} imposes no such structural assumptions.
\end{remark}

\begin{remark}
The noisy oracle \(\hat{w}_i(s_i)\) need not preserve monotonicity or submodularity, which is explicitly accounted for in our resilience analysis.
\end{remark}
\begin{remark}
When $f$ and $\hat{f}$ take values in $[0,1]$, a multiplicative error bound
$\lvert f(S)-\hat f(S)\rvert \le \varepsilon' f(S)$
implies the additive bound
$\lvert f(S)-\hat f(S)\rvert \le \varepsilon'$.
\end{remark}

\subsection{Resilience Guarantees for Continuous Greedy Algorithm under Noisy Oracle Access}

Resilience means that even if utility evaluations are noisy within 
$\epsilon$, the algorithm still achieves near-optimal performance up to an additive error proportional to $\epsilon$ i.e. $\delta \epsilon$, without breaking the 
$1-\frac{1}{e}$-approximation guarantee.
We now show that the \textsc{Continuous Greedy} algorithm satisfies
\cref{thm:CG:robust} under additive oracle noise.

\begin{theorem}[\textsc{Continuous Greedy} Resilience]
\label{thm:CG:robust}
Under inexact utility evaluations \(|\hat{w}(S) - w(S)| \leq \epsilon\), for $\epsilon \le \frac{1}{(MN)^2}$, \textsc{continuous greedy} is an \((\alpha, \delta, \eta)\)-resilient approximation algorithm for the Submodular Welfare (discrete partition) problem, where:
\[
\alpha = 1 - \frac{1}{e}, \quad 
\delta = (4MN + 2M), \quad
\eta = (MN)^8.
\]
\end{theorem}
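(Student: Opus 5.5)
I will follow the standard continuous-greedy analysis of \cite{vondrak2008optimal} and track, step by step, how the oracle error $\epsilon$ propagates. The core objects are the multilinear extension $F(y)=\sum_i \E[w_i(R_i)]$ and its noisy counterpart $\hat F(y)=\sum_i \E[\hat w_i(R_i)]$. Because the error bound holds pointwise and expectation is linear, $|\hat F(y)-F(y)|\le M\epsilon$ for every $y$. The noisy marginals also satisfy $|\hat\omega_{ij}(t)-\omega_{ij}(t)|\le 2\epsilon$, since each is a difference of two noisy evaluations. I will treat sampling error separately from oracle error. With $Z=(MN)^5$ samples per pair, standard concentration makes the sampling error of every marginal estimate at most $O(1/(MN)^2)$ with high probability, which is the same slack already present in the exact analysis. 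The resilience bound then only needs to account for the extra $2\epsilon$ per marginal.

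\textbf{Per-step progress.} Fix a step $t\to t+\lambda$. In the noiseless analysis, for each item $j$ the chosen agent $i_j(t)$ maximizes $\omega_{ij}(t)$. Summing over $j$ and using monotonicity and submodularity of each $w_i$ gives
\[
\sum_j \omega_{i_j(t),j}(t)\ \ge\ f(\OPT)-F(y(t)).
\]
With noisy estimates, $i_j(t)$ maximizes $\hat\omega_{ij}$ rather than $\omega_{ij}$, so each item can lose at most $4\epsilon$ relative to the best true marginal. Summing over the $N$ items, the chosen direction $v$ satisfies $\sum_j \omega_{i_j,j}\ge f(\OPT)-F(y(t))-4N\epsilon$. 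The increment bound
\[
F(y(t+\lambda))-F(y(t))\ \ge\ \lambda\sum_j \omega_{i_j,j}(t)-O(\lambda^2 (MN)^2 f(\OPT))
\]
only uses the true $F$, so it is unaffected by noise. Per agent the loss is more naturally written $4N\epsilon$; I will upper bound it by $4MN\epsilon$ to obtain the stated constant, using the crude accounting in which each of the $MN$ marginals may contribute its $2\epsilon$ twice.

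\textbf{Solving the recursion.} Write $g(t)=f(\OPT)-F(y(t))$. The per-step bound gives
\[
g(t+\lambda)\le (1-\lambda)g(t)+\lambda\cdot 4MN\epsilon+O(\lambda^2(MN)^2).
\]
Iterating $1/\lambda$ times, the multiplicative part yields $g(1)\le e^{-1}f(\OPT)$ plus the usual $O(\lambda (MN)^2)$ discretization loss. The additive noise term sums to at most $4MN\epsilon\cdot\lambda\cdot(1/\lambda)=4MN\epsilon$. This gives
\[
F(y(1))\ge (1-1/e)f(\OPT)-4MN\epsilon-O(1/(MN)^2).
\]
The hypothesis $\epsilon\le 1/(MN)^2$ will be used to absorb the discretization and sampling residuals, which are of order $1/(MN)^2$, into a multiple of $\epsilon$. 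This is delicate: strictly those residuals are not proportional to $\epsilon$. My plan is to follow the convention that they are absorbed into the $(1-1/e)$ guarantee exactly as in the noiseless statement. If that convention is not acceptable, I will tie the step size to $\epsilon$.

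\textbf{Rounding and complexity.} Pipage rounding on the partition matroid gives $\E[f(S)]\ge F(y(1))$. If the rounding is guided by $\hat F$ rather than $F$, each comparison can err by at most $2M\epsilon$ in total, since $\hat F$ is within $M\epsilon$ of $F$ on both endpoints. Moreover, for a partition matroid the rounding can be done item by item without oracle queries, so the loss is at most $2M\epsilon$. Adding the pieces gives $\delta=4MN+2M$. The oracle count $\eta=(MN)^8$ follows directly from the complexity accounting with $\lambda=(MN)^{-2}$ and $Z=(MN)^5$.

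The main obstacle is the per-step lemma. It must hold with the noisy argmax and with sampled rather than exact expectations, and the noisy $\hat w_i$ are neither monotone nor submodular. Submodularity may therefore only be invoked on the true $w_i$, with all noise routed through the $2\epsilon$ marginal perturbation.
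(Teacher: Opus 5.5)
Your skeleton matches the paper's: a noisy-marginal continuous-greedy recursion, unrolled over $1/\lambda$ steps, then rounding. Your placement of the noise is sound, and in fact tighter than the paper's. The noisy argmax costs at most $4\epsilon$ per item, so $\sum_j \omega_{i_j(t),j}(t)\ge f(\OPT)-F(y(t))-4N\epsilon$. The paper instead attaches $(4MN+2M)\epsilon$ to an upper bound on $f(\OPT)$ (Lemma~\ref{lem:cg-upper-bound}) that already holds with no noise term.

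\textbf{Gap: the discretization accounting.} You bound the second-order loss additively, by $O(\lambda^2(MN)^2 f(\OPT))$ per step. Accumulated over $1/\lambda$ steps this gives $O(\lambda (MN)^2)\, f(\OPT)$. For the prescribed $\lambda=(MN)^{-2}$ that is $O(f(\OPT))$, not the $O(1/(MN)^2)$ you conclude, so the bound you actually get is vacuous. This is not only a matter of constants. Any version that first bounds the penalty $N\lambda^2\sum_j\omega_{i_j,j}(t)$ using $\sum_j\omega_{i_j,j}(t)\le N f(\OPT)$ still leaves a total residual of $\lambda N^2 f(\OPT)=f(\OPT)/M^2$, which does not vanish for fixed $M$ (e.g.\ $M=1$).

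\textbf{Fix.} Keep the loss multiplicative, as the paper does:
\begin{itemize}
\item Let $R(t+\lambda)$ stochastically dominate $R(t)\cup D(t)$, where $D(t)$ contains each of the $N$ elements of $I(t)$ independently with probability $\lambda$.
\item Monotonicity and $\Pr[D(t)=\{e\}]=\lambda(1-\lambda)^{N-1}$ give $F(y(t+\lambda))-F(y(t))\ge\rho\sum_j\omega_{i_j,j}(t)\ge\rho\,(g(t)-4N\epsilon)$, with $\rho=\lambda(1-N\lambda)$.
\item Unrolling gives $g(1)\le e^{-(1-N\lambda)}f(\OPT)+4N\epsilon$.
\end{itemize}
The discretization loss is then $O(N\lambda)f(\OPT)=O\bigl(f(\OPT)/(M^2N)\bigr)$, which is genuinely $o(1)\cdot f(\OPT)$.

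\textbf{Two smaller corrections.}
\begin{itemize}
\item The hypothesis $\epsilon\le (MN)^{-2}$ cannot absorb $\epsilon$-independent residuals into $\delta\epsilon$; that would require a \emph{lower} bound on $\epsilon$. Tying $\lambda$ to $\epsilon$ would break $\eta=(MN)^8$. So what you prove, like the paper, is $\alpha=1-1/e-o(1)$, with the $o(1)$ dropped by convention.
\item For rounding, pipage steps guided by $\hat F$ could each cost up to $2M\epsilon$, so the loss is not $2M\epsilon$ ``in total''. You do not need this, though. Independent per-item rounding gives each agent's bundle the same product distribution as $R_i$. Since $f$ is separable across agents, $\E[f(S)]=F(y(1))$ exactly, and your $2M\epsilon$ is pure slack.
\end{itemize}
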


\noindent
The key resilience parameters \(\alpha, \delta\ and,\ \eta\) are summarized in ~\cref{tab:summaryofresults}.
We next prove the \cref{thm:CG:robust} by series of Lemmas.
\cref{lemma:PR} is a standard result and is used as a black-box rounding procedure throughout the paper.

For $y \in [0,1]^X$ and random vector $\hat{y} \in \{0,1\}^X$ independently rounded by  $0/1$ corresponds to probability $y_j$.

\begin{lemma}\label{lem:cg-upper-bound}
Let \( y \in [0,1]^X \) be a fractional solution and let \( R \) denote a random set corresponding to \(\hat{y}\), obtained by sampling each element \( j \in X \) independently with probability \( y_j \).  
Let \(\OPT = \max_{S \in \mathcal{I}} f(S)\).  
If all oracle evaluations are noisy with additive error at most \(\varepsilon\), i.e.
\[
|\hat{f}(S) - f(S)| \leq \varepsilon \quad \forall S \subseteq X,
\]
then, the true optimum satisfies the following upper bound:
\[
f(\OPT) \le F(y) + \max_{I \in \mathcal{I}} \sum_{j \in I} \mathbb{E}[f_R(j)] + (4MN + 2M)\varepsilon,
\]
where:
\begin{itemize}
    \item $F(y)$ is the true multilinear extension of $f$ evaluated at $y$,
    \item $f_R(j) := f(R \cup \{j\}) - f(R)$ is the true marginal contribution of $j$ under random set $R$,
    \item and $\mathcal{I}$ is the family of feasible integral solutions (e.g., a partition matroid).
\end{itemize}
\end{lemma}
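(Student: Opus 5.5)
The plan is to prove the noiseless inequality first and then add the noise term. That term is essentially free, because the right-hand side is stated in terms of the \emph{true} quantities $F(y)$ and $\mathbb{E}[f_R(j)]$.

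\textbf{Noiseless part.} Fix a realization $R$ of the random set. Monotonicity of $f$ gives $f(\OPT)\le f(R\cup \OPT)$. Submodularity of $f$ on $X$ then gives
\[
f(R\cup \OPT)\le f(R)+\sum_{j\in \OPT} f_R(j).
\]
This holds because $f(S)=\sum_i w_i(s_i)$ is a sum of monotone submodular functions, each acting on its own block $\{i\}\times Q$, so $f$ is monotone submodular on $X$ even though $R$ need not be feasible.

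Taking expectation over $R$ yields
\[
f(\OPT)\le F(y)+\sum_{j\in \OPT}\mathbb{E}[f_R(j)].
\]
Since $\OPT\in\mathcal{I}$, the sum is at most $\max_{I\in\mathcal{I}}\sum_{j\in I}\mathbb{E}[f_R(j)]$. This is the classical bound of \cite{vondrak2008optimal}.

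\textbf{Noisy part.} I would make the $\varepsilon$-term explicit by carrying out the argument as it is actually executed with $\hat f$. The algorithm only sees $\hat f$: it estimates $F(y)$ by evaluating $\hat f(R)$ on at most $M$ agent bundles (contributing $M\varepsilon$), and it chooses the maximizing independent set $I$ from the estimated marginals $\hat f(R\cup\{j\})-\hat f(R)$. Each estimated marginal differs from the true one by at most $2\varepsilon$. Summing over at most $|X|=MN$ coordinates of an independent set, and doing this once for the $\OPT$ side and once for the selected $I$ side, gives at most $4MN\varepsilon$. The remaining $M\varepsilon$ comes from replacing $\hat f(\OPT)$ by $f(\OPT)$ and $\hat f(R)$ by $f(R)$, one evaluation per agent block. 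Together these give $(4MN+2M)\varepsilon$.

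Alternatively, the stated inequality with true quantities follows from the noiseless bound simply because $(4MN+2M)\varepsilon\ge 0$. In the written proof I would present the explicit accounting, since that is the version later lemmas need: the noisy maximizer $\hat I$ satisfies
\[
\sum_{j\in\hat I}\mathbb{E}[f_R(j)]\ge \max_{I\in\mathcal{I}}\sum_{j\in I}\mathbb{E}[f_R(j)]-4MN\varepsilon,
\]
and the noisy estimate of $F(y)$ is within $2M\varepsilon$ of the true value.

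\textbf{Main obstacle.} The difficulty is the bookkeeping of how many distinct oracle evaluations enter each comparison, given that $\hat f$ need not be monotone or submodular. Every use of monotonicity or submodularity must therefore be applied to the true $f$ only, and noise should enter only through the triangle inequality on a bounded number of terms.

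My first approach would be to decompose per agent using $f=\sum_i w_i$. Each agent contributes at most $N$ marginal comparisons, each costing $2\varepsilon$ for each of the two independent sets being compared, which gives the $4MN$. Each agent also contributes one value comparison for each of $F$ and $f(\OPT)$, which gives the $2M$.
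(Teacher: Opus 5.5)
Your proof is correct, and its core coincides with the paper's: fix $\OPT$, bound $f(\OPT)\le f(R)+\sum_{j\in\OPT}f_R(j)$, take expectation over $R$, and relax $\OPT$ to the maximum over $\mathcal{I}$. You are more careful than the paper at this step. The paper credits the inequality to submodularity alone, although it also needs monotonicity ($f(\OPT)\le f(R\cup\OPT)$). It also never notes that $f$ must be submodular on all of $2^X$, since $R$ may give an item to several agents.

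The two proofs differ in how they handle the noise term. The paper passes to noisy quantities, using $F(y)\le\hat F(y)+M\varepsilon$ and $2\varepsilon$ per marginal (summed to $2MN\varepsilon$). This gives $f(\OPT)\le\hat F(y)+\max_{I\in\mathcal{I}}\sum_{j\in I}\mathbb{E}[\hat f_R(j)]+(2MN+M)\varepsilon$. It then undoes the same substitutions to return to true quantities, which doubles the constant to $(4MN+2M)\varepsilon$.

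Your remark that the statement follows from the noiseless bound because $(4MN+2M)\varepsilon\ge 0$ reaches the same conclusion in one line. It also shows that the paper's round trip adds nothing to the lemma as stated. The only noise-sensitive content is an intermediate inequality: either the paper's hatted bound, or your bound for the noisy maximizer $\hat I$. Combined with the noiseless bound, yours yields $\sum_{j\in\hat I}\mathbb{E}[f_R(j)]\ge f(\OPT)-F(y)-4MN\varepsilon$, which is the form a continuous-greedy progress step actually uses.

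So lead with the one-line argument and state the accounting as a separate claim. As written, that accounting is a heuristic for matching the constant rather than a derivation, for three reasons. \emph{(i)} It uses a per-agent error model, whereas the lemma's hypothesis $|\hat f-f|\le\varepsilon$ bounds the error of $\hat f$ itself; under that hypothesis an estimate of $F(y)$ is off by at most $\varepsilon$ and a marginal by at most $2\varepsilon$. \emph{(ii)} It counts $MN$ elements, although an independent set of the partition matroid has at most $N$. \emph{(iii)} It charges for replacing $\hat f(\OPT)$ by $f(\OPT)$, which is never queried.
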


\begin{proof}[Proof sketch]

Fix an optimal solution $\OPT \in\mathcal{I}$. By submodularity,
for any random set $R$, we have
\[
f(\OPT) \le f(R) + \sum_{j\in \OPT} \big(f(R\cup\{j\}) - f(R)\big).
\]
Taking expectation and carefully accounting for oracle noise yields
the stated bound. The complete argument is provided in
\cref{app:lem:cg-upper-bound} (~\cref{app:cg-resilience}).
\end{proof}
In the continuous greedy algorithm, at each time step \(t\), we want to know how valuable each 
element \(j\) is if we add it next. This value is the \emph{expected marginal gain}:
\[
\omega_j(t) \;=\; \mathbb{E}\big[f(R(t)\cup \{j\}) - f(R(t))\big],
\]
where \(R(t)\) is a random set formed by including each element independently with probability 
\(y_i(t)\). Since this expectation cannot be computed directly, the algorithm estimates it 
by \emph{Monte Carlo sampling}: draw about \((MN)^5\) random sets \(R(t)\), query the oracle for each 
sample, compute the marginal contribution of \(j\), and average. This gives an accurate estimate 
\(\hat{\omega}_j(t)\) with high probability.

Once we have \(\omega_j(t)\) for every element, we select a \emph{maximum-weight independent set} 
\(I(t)\) under the matroid constraints, using \(\omega_j(t)\) as weights. Intuitively, \(I(t)\) is the 
best feasible subset to move towards, and we update the fractional vector in that direction:
\[
y(t+\lambda) \;=\; y(t) + \lambda \cdot 1_{I(t)}.
\]
Iterating this process makes \(y(1)\) a convex combination of independent sets, which can then be 
rounded to an integral solution.

\begin{lemma}[Noisy resilient bound for Continuous Greedy]
\label{lem:noisy-progress}
Let $y$ be the fractional solution returned by
\textsc{Continuous Greedy} under noisy oracle access.
Then
\[
F(y)
\ge
\left(1-\frac{1}{e}\right)f(\OPT)-(4MN+2M)\varepsilon.
\]
\end{lemma}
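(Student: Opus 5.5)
We follow the standard continuous-greedy progress analysis of \citet{vondrak2008optimal} and track how oracle noise enters each step. The goal is a per-step differential inequality of the form
\[
F(y(t+\lambda)) - F(y(t)) \;\ge\; \lambda\bigl(f(\OPT) - F(y(t)) - \delta\varepsilon\bigr) - O(\lambda^2 (MN)^2 f(\OPT)),
\]
with $\delta = 4MN+2M$. Solving this recursion over $1/\lambda$ steps then gives $F(y(1)) \ge (1-(1-\lambda)^{1/\lambda})(f(\OPT)-\delta\varepsilon) - o(1)$. Since $(1-\lambda)^{1/\lambda}\le 1/e$ and the discretization term is absorbed under $\lambda = 1/(MN)^2$, this yields the claimed $(1-1/e)f(\OPT) - \delta\varepsilon$ after using $(1-1/e)\le 1$ on the noise term.

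\textbf{Step 1 (true marginals, noisy selection).} At time $t$ the algorithm selects $I(t)$ as a maximum-weight independent set with respect to the estimated weights $\hat\omega_j(t)$. Concretely, in the partition matroid, each item $j$ goes to $\arg\max_i \hat\omega_{ij}(t)$. Each $\hat\omega_{ij}$ is an average of differences of two noisy evaluations, so $|\hat\omega_{ij} - \omega_{ij}| \le 2\varepsilon$ plus the Monte Carlo error. We take $(MN)^5$ samples, and Hoeffding makes the Monte Carlo error negligible with high probability; we fold it into the discretization slack. Since $I(t)$ maximizes $\hat\omega$ and $|I(t)|, |\OPT|\le N$, we obtain
\[
\sum_{j\in I(t)} \omega_j(t) \;\ge\; \max_{I\in\mathcal I}\sum_{j\in I}\omega_j(t) - 4N\varepsilon .
\]
\Cref{lem:cg-upper-bound} then gives
\[
\max_I \sum_{j \in I} \omega_j(t) \ge f(\OPT) - F(y(t)) - (4MN+2M)\varepsilon .
\]
The constant there already dominates, and the $4N\varepsilon$ is counted against the $4MN$ slack. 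If needed, we rebalance the bookkeeping so that the total stays $(4MN+2M)\varepsilon$, for example by charging the selection loss inside the lemma's accounting.

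\textbf{Step 2 (progress of the multilinear extension).} We show
\[
F(y+\lambda 1_{I}) - F(y) \ge \lambda(1-\lambda MN)\sum_{j\in I}\omega_j
\]
using multilinearity and submodularity, exactly as in the noiseless proof. The probability that two coordinates are flipped in one step is $O(\lambda^2 (MN)^2)$. Noise does not appear here, because $F$ is the true extension. Combining this with Step 1 gives the differential inequality above.

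\textbf{Step 3 (solve the recursion).} Let $g(t) = f(\OPT) - \delta\varepsilon - F(y(t))$. The inequality gives $g(t+\lambda)\le(1-\lambda)g(t) + O(\lambda^2(MN)^2)$. Iterating yields $g(1)\le e^{-1}g(0) + O(\lambda (MN)^2)$, and $g(0) = f(\OPT) - \delta\varepsilon$.

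The main obstacle is keeping the constant exactly $4MN+2M$, rather than accumulating an extra $4N\varepsilon$ or a per-step factor. The per-step additive noise must multiply $\lambda$ so that it sums to a single $\delta\varepsilon$ over the $1/\lambda$ steps. The selection error must also be charged consistently with \Cref{lem:cg-upper-bound}. We would first try to verify that the lemma's $4MN\varepsilon$ term already covers the $\pm2\varepsilon$ errors on all $MN$ marginals entering the argmax comparison. The assumption $\varepsilon\le 1/(MN)^2$ is used to absorb second-order cross terms between noise and discretization.
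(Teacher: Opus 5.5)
Your route is the paper's: noisy max-weight selection, Lemma~\ref{lem:cg-upper-bound}, a $\lambda(1-c\lambda)$ progress factor, and unrolling the gap recurrence. However, the noise bookkeeping you defer does not close if you cite Lemma~\ref{lem:cg-upper-bound} as stated. That lemma already spends $(4MN+2M)\varepsilon$, and your selection loss adds $4N\varepsilon$ on top. Nothing in the lemma's \emph{statement} absorbs that extra term.

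The missing observation is that the lemma's $\varepsilon$-term is not needed at all. Its proof starts from the exact inequality \eqref{eq:opt_exact}, $f(\OPT)\le F(y)+\max_{I\in\mathcal{I}}\sum_{j\in I}\omega_j$. This uses only monotonicity and submodularity of the true $f$ and involves no oracle. Use it directly; then $\varepsilon$ enters only through your Step~1 comparison,
\[
\sum_{j\in I(t)}\omega_j(t)\;\ge\;\max_{I\in\mathcal{I}}\sum_{j\in I}\omega_j(t)-4N\varepsilon\;\ge\;f(\OPT)-F(y(t))-4N\varepsilon,
\]
and $4N\varepsilon\le(4MN+2M)\varepsilon$. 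Even if the noise sits on the aggregate $\hat f$, so that each marginal is only $2M\varepsilon$-accurate, the selection loss is $4MN\varepsilon$, still within $(4MN+2M)\varepsilon$. The paper's sketch is equally loose here: the $2MN\varepsilon\lambda$ it attributes to noisy marginals is not visibly added to the lemma's $(4MN+2M)\varepsilon$ in \eqref{eq:noisy:progress}. The same observation repairs it.

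Second, your discretization accounting fails numerically. A per-step additive error $O(\lambda^2(MN)^2)f(\OPT)$ sums over $1/\lambda$ steps to $O(\lambda(MN)^2)f(\OPT)$. At $\lambda=(MN)^{-2}$ this is $O(1)\cdot f(\OPT)$, not negligible. Keep your Step~2 factor multiplicative instead. With $\rho=\lambda(1-MN\lambda)$ and $G(t)=f(\OPT)-4N\varepsilon-F(y(t))$, you get $G(t+\lambda)\le(1-\rho)G(t)$; for $G(t)<0$ this follows from monotonicity of $F$ along the trajectory. Hence $G(1)\le e^{-(1-MN\lambda)}G(0)$, a loss of only $O(MN\lambda)f(\OPT)=O(f(\OPT)/(MN))$.

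That residual, together with the Monte Carlo error and failure probability, still cannot be absorbed into an exact $1-1/e$. The slack from $(1-\lambda)^{1/\lambda}<1/e$ is only about $\lambda/(2e)$. So your argument establishes $F(y)\ge(1-1/e-o(1))f(\OPT)-4N\varepsilon$ with high probability, which is the form of Lemma~\ref{app:lem:noisy-progress}. The paper's argument establishes the same thing, since it drops the $o(1)$ between \eqref{eq:noisy:unroll} and \eqref{eq:noisy:gap-final}. The main-text statement without the $o(1)$ needs an additional argument.
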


\begin{proof}[Proof sketch]
The argument combines the noisy upper bound on the optimal value
(Lemma~\ref{lem:cg-upper-bound}) with a stepwise analysis of the continuous
greedy update. A full derivation of the recurrence is provided in
Appendix~\ref{app:cg-resilience}, see
Lemmas~\ref{app:lem:cg-upper-bound} and~\ref{app:lem:noisy-progress}.
\end{proof}


\begin{remark}  
In the noiseless case ($\varepsilon=0$), the additive error vanishes and we recover the classical
$(1-1/e)$-approximation guarantee.  
With noisy oracles, the additive error grows only linearly in $(MN)\varepsilon$, proving resilience. 
\end{remark}

\paragraph{Pipage Rounding}To establish resilience, our approach leverages a powerful rounding technique first introduced by \cite{ageev2004pipage}, and subsequently refined for matroid polytopes by \cite{calinescu2007maximizing}. We incorporate this method as a modular, black-box tool within our framework.

\begin{lemma}\label{lemma:PR}
There exists a randomized algorithm that runs in polynomial time and, given oracle access to a matroid \( \mathcal{M} = (X, \mathcal{I}) \), a monotone submodular function \( f: 2^X \rightarrow \mathbb{R}_+ \) (accessible via a value oracle), and a fractional vector \( y \in P(\mathcal{M}) \), returns an independent set \( S \in \mathcal{I} \) such that, with high probability,
\[
f(S) \geq (1 - o(1)) \cdot \mathbb{E}[f(\hat{y})],
\]
where \( \hat{y} \subseteq X \) is a random subset in which each element \( j \in X \) is included independently with probability \( y_j \). The \( o(1) \) term can be made polynomially small in \( M = |X| \).
\end{lemma}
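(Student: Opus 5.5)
The statement is the pipage rounding guarantee for matroid polytopes of \citet{calinescu2007maximizing}, building on \cite{ageev2004pipage}. I would reproduce that argument, specialized to our partition matroid $\mathcal{I}$ on $X=P\times Q$. Throughout I work with the multilinear extension $F(y)=\mathbb{E}[f(\hat y)]$. The plan is to start from $y\in\mathcal{P}(\mathcal{M})$ and move $y$ to an integral vertex through a sequence of randomized steps, each of which does not decrease $F$ in expectation. The argument rests on one structural fact. For any $a,b\in X$, the function $\phi(t)=F(y+t(e_a-e_b))$ is convex in $t$. This holds because $F$ is multilinear, so $\phi$ is a quadratic in $t$ with leading coefficient $-\partial^2F/\partial y_a\partial y_b\ge 0$. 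That sign follows from submodularity, since mixed second partials of a submodular multilinear extension are nonpositive.

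The key steps, in order, are as follows.

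\textbf{(i) Reduce to a base.} By monotonicity, raising coordinates only increases $F$. So I would first extend $y$ coordinatewise to a point of the base polytope. For the partition matroid this means $\sum_i y_{ij}=1$ for each item $j$; I would use a dummy ``unassigned'' agent if needed, or the agent-side quotas when they are present.

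\textbf{(ii) Perform one pipage step on a fractional base.} Suppose $y$ is fractional. Pick a tight set containing two fractional coordinates $a,b$. In the partition matroid, these are two agents sharing a fractional item $j$. Move along $e_a-e_b$ to the two extreme feasible points $y^+$ and $y^-$, choosing between them with probabilities that keep the expectation equal to $y$. Since $\phi$ is convex and both endpoints are on the line through $y$, we get $\mathbb{E}[F(y')]\ge F(y)$. Each step makes at least one more coordinate integral, or makes a new constraint tight. Hence at most $|X|$ steps suffice for the partition matroid. For a general matroid, the tight sets form a lattice and $O(|X|^2)$ steps suffice.

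\textbf{(iii) Conclude in expectation.} The final integral point is an independent set $S\in\mathcal{I}$. Chaining the steps gives $\mathbb{E}[f(S)]\ge F(y)=\mathbb{E}[f(\hat y)]$.

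The main obstacle is that the lemma states a \emph{high-probability} bound with a $(1-o(1))$ factor, not merely an expectation bound. It also assumes only value-oracle access, so $F$ itself must be estimated. I would handle the high-probability part by independent repetition. The algorithm runs the randomized rounding $K=\mathrm{poly}(|X|)$ times, estimates each resulting $f(S)$ with the oracle, and returns the best one. Because $f\le f(\OPT)$ and $\mathbb{E}[f(S)]\ge F(y)$, a reverse Markov argument shows that each run attains $f(S)\ge(1-\gamma)F(y)$ with probability at least about $\gamma F(y)/f(\OPT)$. If $F(y)$ is at least an inverse polynomial of $f(\OPT)$, then taking $\gamma$ polynomially small and $K$ large makes failure exponentially unlikely. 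Otherwise the claim degenerates, and a singleton maximizer can be used instead.

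Alternatively, I would use the deterministic variant of pipage rounding. At each step it evaluates $F(y^+)$ and $F(y^-)$ by Monte Carlo sampling with $\mathrm{poly}(|X|)$ samples and picks the larger. Chernoff bounds make every estimate accurate to within a $1/\mathrm{poly}(|X|)$ fraction of $f(\OPT)$ simultaneously, with high probability. The accumulated loss over polynomially many steps is then a $o(1)$ multiplicative factor, which is the stated guarantee.
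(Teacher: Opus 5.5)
The paper gives no proof of this lemma. It imports it as a black box from \citet{calinescu2007maximizing} and \cite{ageev2004pipage}, so your proposal is a reconstruction of the cited result, and it is correct.

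Your main route differs from the cited one in where the randomness sits. The 2007 version, which is where the high-probability $(1-o(1))$ form comes from, makes deterministic pipage steps: it picks the better of $y^{+},y^{-}$ using Monte Carlo estimates of $F$. You instead use oblivious randomized steps. These never evaluate $F$, and convexity along $e_a-e_b$ gives the clean submartingale bound $\mathbb{E}[f(S)]\ge F(y)$. You then amplify to high probability by reverse Markov and repetition. For the partition matroid the paper actually uses, your randomized pipage is just independent item-by-item assignment: with a dummy agent, item $j$ goes to agent $i$ with probability $y_{ij}$. For the separable welfare objective this already gives $\mathbb{E}[f(S)]=F(y)$ exactly, which is all an expectation-type resilience guarantee needs.

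Three details should be fixed.

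\emph{The singleton fallback.} It works with threshold $F(y)<f(\OPT)/|X|$. A nonnegative submodular $f$ is subadditive, so $f(\OPT)\le |X|\max_{j}f(\{j\})$ over independent singletons, and the best singleton then exceeds $F(y)$. You cannot tell which case holds, so return the best among the $K$ rounded sets and the best singleton. In the other case ($F(y)\ge f(\OPT)/|X|$), each run succeeds with probability at least $\gamma/|X|$. Hence $K\ge |X|^{2}/\gamma$ gives success probability at least $1-e^{-|X|}$.

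\emph{The deterministic alternative.} It needs the same fallback. An additive error of $f(\OPT)/\mathrm{poly}(|X|)$ is a $(1-o(1))$ factor of $F(y)$ only when $F(y)\ge f(\OPT)/\mathrm{poly}(|X|)$.

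\emph{Agent quotas.} Drop the suggestion that quotas fit this scheme. With quotas the feasible family is an intersection of two partition matroids, which is not a matroid. A swap along $e_a-e_b$ within one item's block can also violate a quota. That case lies outside both your argument and the lemma.
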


This result implies that any fractional solution \( y \in P(\mathcal{M}) \) can be efficiently converted into an integral solution with only a negligible loss in expected function value. Therefore, it suffices to consider the continuous optimization problem
\[
\max \left\{ F(y) : y \in P(\mathcal{M}) \right\},
\]
where \( F \) denotes the multilinear extension of \( f \), instead of directly optimizing the discrete problem \( \max \left\{ f(S) : S \in \mathcal{I} \right\} \).
\paragraph{Resilient Approximation via Pipage Rounding.}
Let $y \in P(\mathcal{M})$ be the fractional solution obtained by the Continuous Greedy Algorithm with noisy marginal estimates bounded by $\varepsilon$. Then, the multilinear extension satisfies
\[
F(y) \ge \left(1 - \frac{1}{e}\right) f(\OPT) - (4MN + 2M)\varepsilon .
\]
Applying pipage rounding (Lemma \ref{lemma:PR}) to $y$ produces an integral independent set $S \in \mathcal{I}$ such that
\begin{align}
&f(S) \ge (1 - o(1)) \, F(y) \notag \\
&\ge \left(1 - \frac{1}{e} - o(1)\right) f(\OPT) - (4MN + 2M)\, \varepsilon.
\end{align}
where $o(1)$ is the discretization error of the continuous greedy process and can be made arbitrarily small as $M \to \infty$.

\begin{remark} For noiseless oracles, the additive $\varepsilon$ term vanishes, recovering the classical $1 - 1/e - o(1)$ approximation guarantee.
\end{remark}

\section{
Algorithmic Framework and Analysis
}
\subsection{MA-CMAB for Social Welfare: Offline to Stochastic}
\begin{center}
    \begin{minipage}{0.45\textwidth}
    \vspace{-.21in}
\begin{algorithm}[H]
\caption{\textsc{Submodular welfare MA-CMAB Algorithm}}\label{alg:bi-criteria-algo}
\begin{algorithmic}[1]
\REQUIRE \small Horizon $T$, ground set $X$, 
$(\alpha,\delta,\eta)$-resilient algorithm $\mathcal{A}{lg}$.
\STATE Set $m \gets \left\lceil \frac{\delta^{2/3} T^{2/3}M^{2/3} (\log T)^{1/3} }{2{\eta}^{2/3}}\right\rceil$
\STATE \textbf{Exploration Phase:}
\WHILE{$\mathcal{A}{lg}$ queries action $S$($m$-times)}
    \FOR{$i=1$ to $M$ agents}
        \STATE Select set $S$, observe $w_{ij}(s_i)'s\ \ \forall i$'s
    \ENDFOR
    \STATE Compute $\bar{w_i}(s_i) = \frac{1}{m}\sum_{k=1}^m w_{i,j}(s_i) \ \ \forall i$'s

    \STATE Compute $\bar{f}(S) = \sum_{i=1}^M w_i(s_i)$
     
    \STATE Return $\bar{f}(S)$ to $\mathcal{A}{lg}$
\ENDWHILE
\STATE \textbf{Exploitation Phase:}
\STATE Let $S \gets$ output of $\mathcal{A}$
\WHILE{$t\leq T$}
    \STATE Play $S$
\ENDWHILE
\end{algorithmic}
\end{algorithm}
 \vspace{-.21in}
 \end{minipage}
  \end{center}
\newcommand{\mopt}{m^*}
\newcommand{\moptrnd}{m^\xi}

The proposed \textsc{MA-CMAB} algorithm \ref{alg:bi-criteria-algo}, addresses the complex problem of \emph{multi-agent allocation under submodular social welfare} in a stochastic setting through a principled offline-to-online reduction. The algorithm is divided into two phases: an \textbf{exploration phase}, in which empirical estimates are collected over $m = \left\lceil \frac{\delta^{2/3} T^{2/3} M^{2/3} (\log T)^{1/3} }{2{\eta}^{2/3}} \right\rceil$ rounds, and an \textbf{exploitation phase}, where the algorithm commits to a partition $S = \{s_1, \ldots, s_M\}$ returned by a $(\alpha, \delta, \eta)$-resilient offline oracle $\mathcal{A}{lg}$. During exploration, each queried action $S$ is evaluated across agents by observing stochastic rewards $w_{ij}(S_i)$, and the corresponding estimates $\bar{w}_i(s_i)$ and $\bar{f}(S)$ are averaged and passed back to $\mathcal{A}{lg}$. This mechanism enables robust selection of a high-welfare allocation using offline optimization over empirical means. The design draws inspiration from the \texttt{ETC} procedure for non-combinatorial bandits \cite{slivkins2019introduction}, extending it to a multi-agent, submodular, and combinatorial domain.

The regret analysis is structured around a high-probability ``clean event'' $\mathcal{E}$, where empirical estimates concentrate around their true means (as shown in \cref{lem:clean}), and a low-probability ``unclean event,'' whose regret contribution is bounded separately. Under the clean event, the output of $\mathcal{A}{lg}$ yields an $(\alpha, \delta)$-approximate allocation, ensuring near-optimal Utilitarian Social Welfare (USW) during the exploitation phase. Our setting is more challenging than prior work such as \texttt{C-ETC} \cite{nie2023framework}, due to the need to (i) select disjoint subsets of items for multiple agents, (ii) optimize a monotone submodular utility function under partitioning constraints, and (iii) operate without simple marginal gain feedback. The resulting allocation problem is NP-hard even offline, as it generalizes the discrete submodular partitioning problem. Consequently, our framework must address the more challenging task of identifying optimal agent-item allocations based  on stochastic reward feedback. This positions the MA-CMAB algorithm as a significant advancement for online social welfare maximization in structured, uncertain, and combinatorial multi-agent environments.
\section{Offline to Stochastic Regret}

\subsection{Notation Overview}

We consider a horizon of \(T\) rounds, \(M\) agents, and \(N\) items. A feasible \emph{action} \(A\) is an allocation obeying the problem constraints (e.g., a partition of items to agents). At round \(t\), playing \(A\) yields a bounded stochastic reward \(f_t(A)\in[0,1]\) with mean \(f(A):=\E[f_t(A)]\). During exploration, the algorithm selects \(\eta\in\mathbb{N}\) actions \(S_1,\ldots,S_\eta\) and plays each action \(m\in\mathbb{N}\) times; the empirical mean is \(\bar f(S_i):=\frac{1}{m}\sum_{\ell=1}^m f_{t_{i,\ell}}(S_i)\). Let \(T_i\) be the time index by which action \(S_i\) has been completed \(m\) plays, and set \(T_0=0\), \(T_{\eta+1}\leq T\). 
We use the confidence radius
\begin{equation}\label{eq:rad}
 \mathrm{rad}\ :=\ \sqrt{\frac{\log T}{2m}}.
\end{equation}
The offline subroutine is \((\alpha,\delta)\)-robust in the sense that if the inputs are uniformly within \(\mathrm{\epsilon}\) of the truth on all actions it evaluates, then the returned allocation \(S\) satisfies
\begin{equation}\label{eq:robust}
  \E[f(S)]\ \ge\ \alpha\,f(\OPT)\ -\ \delta\,\mathrm{\epsilon}.
\end{equation}
If the final score aggregates \(M\) agentwise terms via a \(1\)-Lipschitz map, we set \(C:=M\) and \(\mathrm{\epsilon}=C\cdot\mathrm{rad}\); otherwise \(C:=1\) and \(\mathrm{\epsilon}=\mathrm{rad}\). The \(\alpha\)-regret is
\begin{equation}\label{eq:regret-def}
  \mathcal{R}(T)\ :=\ \sum_{t=1}^T \big(\alpha\,f(\OPT) - f_t(S_t)\big).
\end{equation}
\subsection{Clean Event and Concentration}
We define a high-probability event under which all empirical estimates used
by the offline subroutine are uniformly accurate.

\begin{definition}[Clean event]\label{def:clean}
Define
\begin{equation}\label{eq:clean-event}
  \mathcal{E}\ :=\ \bigcap_{j=1}^{\eta}\left\{\,\big|\bar f(S_j)-f(S_j)\big|<\mathrm{rad}\right\}.
\end{equation}
In the aggregated model use the threshold \(C\,\mathrm{rad}\) with the same \(\mathrm{rad}\) as in \eqref{eq:rad}.
\end{definition}

\begin{lemma}[Probability of the clean event]\label{lem:clean}
With \(\mathrm{rad}\) as in \eqref{eq:rad}, we have
\begin{equation}\label{eq:clean-prob}
  \Prb(\mathcal{E})\ \ge\ 1-\frac{2\eta}{T}.
\end{equation}
\end{lemma}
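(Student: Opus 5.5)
The plan is a Hoeffding bound for each explored action, followed by a union bound over the $\eta$ actions. Fix an index $j\in\{1,\dots,\eta\}$. The empirical mean $\bar f(S_j)$ averages $m$ plays of $S_j$, and the per-round rewards $f_{t_{j,\ell}}(S_j)$ are i.i.d. in $[0,1]$ with mean $f(S_j)$, by the online protocol. The one subtlety is that the actions $S_j$ are chosen adaptively by $\mathcal{A}{lg}$ from earlier estimates. I would handle this with the standard ``reward tape'' device (as in \cite{slivkins2019introduction}): for every action, pre-sample an i.i.d. sequence of its rewards, and let the $\ell$-th play of that action read the $\ell$-th entry. Under this coupling, $\bar f(S_j)$ is the average of the first $m$ tape entries of $S_j$. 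This is the step I expect to need the most care, since a naive union bound over all actions in $\mathcal{D}$ would cost a factor $|\mathcal{D}|$ instead of $\eta$.

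To keep the count at $\eta$, I would index the events by the query number $j$ rather than by the action itself. Conditioned on the history up to the moment the $j$-th query is issued, $S_j$ is fixed. The $m$ rewards then collected for it are fresh i.i.d. draws, independent of that history: the protocol gives i.i.d. rewards across rounds, and if an action is re-queried the algorithm can simply reuse, or freshly redraw, its estimate. Hoeffding's inequality then gives
\[
\Prb\bigl(|\bar f(S_j)-f(S_j)|\ge \mathrm{rad}\,\big|\,\text{history}\bigr)\le 2\exp(-2m\,\mathrm{rad}^2)=\frac{2}{T},
\]
using $\mathrm{rad}=\sqrt{\log T/(2m)}$ from \eqref{eq:rad}. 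Taking expectation over the history removes the conditioning.

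A union bound over $j=1,\dots,\eta$ gives $\Prb(\mathcal{E}^c)\le 2\eta/T$, which is \eqref{eq:clean-prob}.

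For the aggregated model, the threshold is $C\,\mathrm{rad}$ with $C=M$. Write $\bar f=\sum_i \bar w_i$, so that $|\bar f-f|\le\sum_i|\bar w_i-w_i|$. If every agent's term is within $\mathrm{rad}$ of its mean, the sum is within $M\,\mathrm{rad}$. A naive per-agent union bound would give $2\eta M/T$, so I would avoid it. Instead I would apply Hoeffding directly to the total reward $f_t(S_j)\in[0,1]$, which gives deviation at most $\mathrm{rad}\le C\,\mathrm{rad}$ with the same $2/T$ failure probability. The bound $1-2\eta/T$ therefore holds in both cases.
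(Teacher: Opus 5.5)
Your proposal is correct and follows the same route as the paper: Hoeffding's inequality for each explored action with $\mathrm{rad}=\sqrt{\log T/(2m)}$ gives failure probability $2/T$, and a union bound over the $\eta$ queries finishes the proof. Your conditioning-on-history step for adaptively chosen actions and your handling of the aggregated case, by applying Hoeffding directly to the total reward in $[0,1]$, are sound tightenings of the paper's argument; its per-component union bound, as written, would only give $2C/T$ per action rather than $2/T$.
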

\begin{proof}[Proof sketch]
Lemma~\ref{lem:clean} follows from Hoeffding’s inequality and a union bound
over explored actions. A complete proof is provided in
Appendix~\ref{app:clean}.
\end{proof}

\subsection{Robust Approximation Transfer Under Clean Estimates:}
Under the clean event, uniform concentration ensures that the offline oracle’s approximation guarantee transfers to the true objective up to an additive error.
\begin{lemma}[Robust transfer]\label{lem:robust-transfer}
Let \(\tilde f\) be the empirical objective and suppose \(\|\tilde f-f\|_\infty\le \mathrm{\epsilon}\) on all actions the subroutine evaluates. If the subroutine is \((\alpha,\delta,\eta)\)-robust as in \eqref{eq:robust}, then for its output \(S\),
\begin{equation}\label{eq:transfer}
  \E[f(S)]\ \ge\ \alpha f(\OPT)\ -\ \delta\,\mathrm{\epsilon}.
\end{equation}
In particular, on \(\mathcal{E}\) with \(\mathrm{\epsilon}=C\,\mathrm{rad}\),
\begin{equation}\label{eq:transfer-gap}
  \alpha f(\OPT)-\E[f(S)]\ \le\ \delta\,C\,\mathrm{rad}.
\end{equation}
\end{lemma}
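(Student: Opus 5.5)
The lemma is essentially an unpacking of the definition, so the plan is to treat it as a direct instantiation of \eqref{eq:robust} and Definition~\ref{def:robustness}. The only real work is checking that the hypotheses of that definition are met by the empirical oracle, and then specializing the error level on the clean event.

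\textbf{Step 1 (first display).} Take $\hat f:=\tilde f$ as the approximate oracle in Definition~\ref{def:robustness}. The hypothesis $\|\tilde f-f\|_\infty\le\epsilon$ is assumed only on the actions the subroutine actually queries, so I will argue this suffices. Conditional on the realized estimates, the run of $\mathcal{A}{lg}$ is a deterministic function of the values it receives, apart from its internal randomness. It therefore cannot distinguish $\tilde f$ from a modified oracle $\hat f$ that agrees with $\tilde f$ on the queried sets and equals $f$ elsewhere. That modified $\hat f$ is uniformly within $\epsilon$ of $f$ on all of $\Omega$, which is the form Definition~\ref{def:robustness} requires. The robustness guarantee then gives $\E[f(S)]\ge\alpha f(\OPT)-\delta\epsilon$, where the expectation is over the internal randomness of $\mathcal{A}{lg}$, e.g.\ the rounding of Lemma~\ref{lemma:PR} and the sampling inside continuous greedy. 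For the continuous greedy instance, the constants come from Theorem~\ref{thm:CG:robust} via Lemma~\ref{lem:noisy-progress} and pipage rounding.

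\textbf{Step 2 (specialization).} On $\mathcal{E}$, Definition~\ref{def:clean} gives $|\bar f(S_j)-f(S_j)|<C\,\mathrm{rad}$ for every one of the $\eta$ explored actions. In the aggregated model, each agentwise average is within $\mathrm{rad}$, and the sum over $M$ agents is $1$-Lipschitz in each term. So $\epsilon=C\,\mathrm{rad}$ is admissible, and substituting into the first display and rearranging yields \eqref{eq:transfer-gap}.

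\textbf{Main obstacle.} The delicate point is that the set of queried actions is itself adaptive and random, so ``on all actions the subroutine evaluates'' must be read pathwise. The inequality is then a statement conditional on the exploration data, with the expectation taken only over the algorithm's internal coins. I would handle this by conditioning on the sigma-algebra generated by the exploration rewards and applying the modified-oracle argument of Step~1 on each sample path in $\mathcal{E}$. A secondary check is the precondition $\epsilon\le 1/(MN)^2$ in Theorem~\ref{thm:CG:robust}. This requires $C\,\mathrm{rad}\le 1/(MN)^2$, which holds for $T$ large under the choice of $m$ in Algorithm~\ref{alg:bi-criteria-algo}. I would state this as a standing assumption on $T$, with small $T$ absorbed into the regret constant.
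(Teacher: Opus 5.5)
Your main line is the paper's. Its proof (Appendix~\ref{app:robust}) invokes the robustness property with $\tilde f$ as the oracle, and the in-particular clause is obtained exactly as in your Step~2, by substituting $\epsilon=C\,\mathrm{rad}$ on $\mathcal{E}$. Read literally, \eqref{eq:robust} is already phrased for inputs accurate on the evaluated actions, so the first display is immediate. Your modified-oracle detour is what is really needed to connect Definition~\ref{def:robustness} to this lemma, since that definition (uniform accuracy on all of $\Omega$) is what Theorem~\ref{thm:CG:robust} certifies. The paper skips that step; it asserts instead that the inequality holds for every realization of the subroutine's coins, which the definition does not provide.

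Your pathwise argument is sound when the queried sets are determined by the oracle answers alone. That covers a deterministic subroutine, or one whose coins are used only in a final rounding that issues no queries. It breaks down when the queries depend on the coins, which is precisely the continuous-greedy instance you cite: there the queried sets are the Monte Carlo samples $R(t)$ and $R(t)\cup\{j\}$. The oracle ``equal to $\tilde f$ on the queried sets and to $f$ elsewhere'' then depends on the coin realization. Definition~\ref{def:robustness}, however, controls $\E[f(S)]$ only for a \emph{fixed} $\epsilon$-accurate oracle averaged over the coins. Conditioning on the exploration data does not rescue this, because that data reveals which sets were queried. It therefore changes the law of the coins you then average over: runs that issue more queries are more likely to leave $\mathcal{E}$.

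The fix is a coin-independent surrogate. Couple the rewards to a fixed table (caching repeated queries), and set $\hat f(A)=\tilde f(A)$ if $|\tilde f(A)-f(A)|<\epsilon$ and $\hat f(A)=f(A)$ otherwise. This $\hat f$ is uniformly $\epsilon$-accurate and independent of the coins. On $\mathcal{E}$ the runs under $\tilde f$ and $\hat f$ coincide coin by coin, so $f\in[0,1]$ yields
\[
\E[f(S)]\ \ge\ \alpha f(\OPT)-\delta\epsilon-\Pr(\bar{\mathcal{E}}).
\]
This is the lemma up to an additive $2\eta/T$ (Lemma~\ref{lem:clean}). That adds only $O(\eta)$ to the regret, the same order as the failure term already in Theorem~\ref{thm:final}, and it is all the regret bound needs.
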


\begin{proof}[Proof sketch]
The result follows directly from the $(\alpha,\delta)$-robustness
definition applied to the empirical objective.
Full details are provided in ~\cref{app:robust}.
\end{proof}

\subsection{Regret Decomposition:}
We first bound the regret incurred during exploration and exploitation separately, and then combine these bounds to obtain a conditional regret guarantee that can be optimized by an appropriate choice of the exploration length.
\begin{lemma}[Exploration bound]\label{lem:explore}
During exploration, each of the \(\eta\) actions is played \(m\) times and \(f_t(\cdot)\in[0,1]\). Hence,
\begin{equation}\label{eq:explore}
  \sum_{i=1}^{\eta} m\,\big(\alpha f(\OPT)-\E[f(S_i)]\big)\ \le\ \eta\,m.
\end{equation}
\end{lemma}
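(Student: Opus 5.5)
The bound is a per-term estimate followed by a sum over the $\eta$ exploration actions. I will show that every summand satisfies $\alpha f(\OPT)-\E[f(S_i)]\le 1$. Multiplying by $m$ and summing over $i=1,\dots,\eta$ then gives $\eta m$.

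For the per-term bound I use only two facts from the model in Section~\ref{sec:prob_state}:
\begin{itemize}
\item Rewards satisfy $f_t(\cdot)\in[0,1]$ and $f(A)=\E[f_t(A)]$. Hence $f(A)\in[0,1]$ for every feasible allocation $A\in\mathcal{D}$, and in particular $0\le f(\OPT)\le 1$.
\item The approximation factor satisfies $\alpha\in(0,1]$; here $\alpha=1-1/e$. This gives $\alpha f(\OPT)\le f(\OPT)\le 1$.
\end{itemize}
Each exploration action $S_i$ is feasible, so $f(S_i)\ge 0$. If $S_i$ is random, because the queries of $\mathcal{A}{lg}$ may depend on earlier empirical means, then still $\E[f(S_i)]\ge 0$. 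Therefore
\[
\alpha f(\OPT)-\E[f(S_i)]\ \le\ \alpha f(\OPT)\ \le\ 1 .
\]

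Summing over the $m$ plays of each of the $\eta$ actions gives $\sum_{i=1}^{\eta} m\bigl(\alpha f(\OPT)-\E[f(S_i)]\bigr)\le \eta m$, which is the claim.

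The one point that needs care is the adaptivity of the queries. I handle it by conditioning on the history up to the point where $S_i$ is chosen. The inequality $\alpha f(\OPT)-f(S_i)\le 1$ holds pointwise for every realization of $S_i$, so taking expectations preserves it.

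The same observation covers exploration rounds that are cut off by the horizon, that is, when $T_\eta$ would exceed $T$. Each such round contributes at most $1$, so it is dominated by the count $\eta m$.

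Nothing else is needed. In particular, no concentration argument is used, so the clean event $\mathcal{E}$ plays no role here.
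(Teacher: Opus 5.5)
Your proof is correct and matches the paper's argument: bounded rewards give $0\le f(\cdot)\le 1$, so each summand satisfies $\alpha f(\OPT)-\E[f(S_i)]\le \alpha\le 1$, and summing over the $m$ plays of each of the $\eta$ actions yields $\eta m$. Your remarks on adaptively chosen queries and on horizon truncation go beyond what the paper spells out, but they are harmless.
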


\begin{proof}[Proof sketch]
Lemma~\ref{lem:explore} follows from bounded rewards.
For any \(i\), \(\alpha f(\OPT)-\E[f(S_i)]\le \alpha\cdot 1-0\le 1\). Summing \(m\) plays per \(S_i\), then over \(i=1,\dots,\eta\), yields \eqref{eq:explore}. The detailed argument is deferred to ~\cref{app:decomp}.
\end{proof}

\begin{lemma}[Exploitation bound under \(\mathcal{E}\)]\label{lem:exploit}
Conditioned on \(\mathcal{E}\), let \(S\) be the exploitation action returned by the subroutine. Then
\begin{equation}\label{eq:exploit}
  \sum_{t=T_\eta+1}^{T}\big(\alpha f(\OPT)-\E[f(S)]\big)\ \le\ T\cdot \delta\,C\,\mathrm{rad}.
\end{equation}
\end{lemma}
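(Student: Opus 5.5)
The plan is to reduce the statement to a per-round bound and then sum over the exploitation window. Conditioned on $\mathcal{E}$, every action $S_1,\dots,S_\eta$ that the subroutine queried during exploration has an empirical mean within $\mathrm{rad}$ of its true mean. In the aggregated model the threshold is $C\,\mathrm{rad}$. Hence the empirical objective $\tilde f=\bar f$ fed to $\mathcal{A}{lg}$ satisfies $\|\tilde f-f\|_\infty\le C\,\mathrm{rad}=\epsilon$ on every action the subroutine evaluates. This is precisely the hypothesis of Lemma~\ref{lem:robust-transfer}. Invoking it, specifically \eqref{eq:transfer-gap}, the returned exploitation action $S$ satisfies
\[
\alpha f(\OPT)-\E[f(S)\mid\mathcal{E}]\ \le\ \delta\,C\,\mathrm{rad}.
\]
Here the expectation is over the internal randomization of the subroutine, namely the randomized rounding in Lemma~\ref{lemma:PR}.

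Next I would argue that during exploitation the algorithm plays the fixed action $S$ in every round $t=T_\eta+1,\dots,T$. Because rewards are i.i.d.\ across rounds for a fixed allocation, with $\E[f_t(S)]=f(S)$, each summand in \eqref{eq:exploit} equals the same gap $\alpha f(\OPT)-\E[f(S)]$. This uses independence of the fresh exploitation noise from the exploration samples that determined $S$, so conditioning on $\mathcal{E}$ does not bias $f_t(S)$ given $S$. The sum is therefore $(T-T_\eta)$ times this gap, which is at most $(T-T_\eta)\,\delta C\,\mathrm{rad}$. Since $T_\eta\ge 0$ and the gap bound is nonnegative, we may replace $T-T_\eta$ by $T$, which yields \eqref{eq:exploit}. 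If the per-round gap happens to be negative, the bound holds trivially.

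The main subtlety is the conditioning step. The robustness guarantee \eqref{eq:robust} is stated for deterministic $\epsilon$-accurate inputs, whereas here the inputs are random and we condition on $\mathcal{E}$. I would handle this by fixing any realization of the exploration samples in $\mathcal{E}$. Applying \eqref{eq:robust} to that fixed $\epsilon$-accurate oracle gives the bound with expectation over the subroutine's own coins. Averaging over realizations within $\mathcal{E}$ then preserves the inequality. A second point to check is the adaptivity of the subroutine's queries: which $S_j$ get queried depends on earlier estimates. This is harmless because $\mathcal{E}$ is defined over whatever $\eta$ actions were actually queried, and Lemma~\ref{lem:clean} already accounts for this through its union bound.
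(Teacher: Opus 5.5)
Your proposal is correct and follows the paper's proof. On $\mathcal{E}$ the empirical objective is within $C\,\mathrm{rad}$ of $f$ on every evaluated action, so Lemma~\ref{lem:robust-transfer} bounds the per-round gap by $\delta C\,\mathrm{rad}$, and summing over at most $T$ exploitation rounds gives \eqref{eq:exploit}. Your extra handling of conditioning (fixing a realization in $\mathcal{E}$ and averaging only over the subroutine's own coins) is sound and more careful than the paper's claim that the robustness bound holds for every realization, but it is the same route.
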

\begin{proof}[Proof sketch]
By \cref{lem:robust-transfer}, on \(\mathcal{E}\) we have \(\alpha f(\OPT)-\E[f(S)]\le \delta\,C\,\mathrm{rad}\) per step. The exploitation phase has length at most \(T\), which gives \eqref{eq:exploit}. Detailed proofs appear in Appendix~\ref{app:decomp}.
\end{proof}

\begin{theorem}[Conditional regret bound]\label{thm:conditional}
With \(\mathrm{rad}\) as in \eqref{eq:rad}, the conditional expected regret satisfies
\begin{equation}\label{eq:cond-regret}
  \E[\mathcal{R}(T)\mid \mathcal{E}]\ \le\ \underbrace{\eta m}_{\text{exploration}} \ +\ \underbrace{T\,\delta\,C\,\sqrt{\frac{\log T}{2m}}}_{\text{exploitation}}.
\end{equation}
\end{theorem}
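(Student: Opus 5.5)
The plan is to split the horizon into the exploration window $\{1,\dots,T_\eta\}$ and the exploitation window $\{T_\eta+1,\dots,T\}$. I will write the conditional expected regret as the sum of the two pieces. The point of the split is that, given $\mathcal{E}$, the two windows can be controlled by \cref{lem:explore} and \cref{lem:exploit} separately. Concretely, starting from \eqref{eq:regret-def},
\[
\E[\mathcal{R}(T)\mid\mathcal{E}] = \E\Bigl[\sum_{t=1}^{T_\eta}\bigl(\alpha f(\OPT)-f_t(S_t)\bigr)\Bigm|\mathcal{E}\Bigr] + \E\Bigl[\sum_{t=T_\eta+1}^{T}\bigl(\alpha f(\OPT)-f_t(S)\bigr)\Bigm|\mathcal{E}\Bigr].
\]
I use $T_\eta=\eta m$, which holds because each of the $\eta$ queried actions is played exactly $m$ times.

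\textbf{Exploration term.} Every summand is at most $\alpha f(\OPT)-f_t(S_t)\le \alpha\le 1$, since rewards lie in $[0,1]$. This bound holds pointwise, so it survives conditioning on $\mathcal{E}$, and the term is at most $\eta m$. This is exactly \cref{lem:explore}. Using the pointwise bound sidesteps the fact that conditioning on $\mathcal{E}$ biases the exploration rewards.

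\textbf{Exploitation term.} On $\mathcal{E}$, every evaluated action satisfies $|\bar f(S_j)-f(S_j)|<C\,\mathrm{rad}$. So the offline subroutine receives an oracle with uniform error $\epsilon=C\,\mathrm{rad}$, and \cref{lem:robust-transfer} gives $\alpha f(\OPT)-\E[f(S)]\le\delta C\,\mathrm{rad}$, where the expectation is over the subroutine's internal randomness.

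\textbf{The delicate step} is justifying that $\E[f_t(S)\mid\mathcal{E}]=\E[f(S)\mid\mathcal{E}]$ for $t>T_\eta$. The event $\mathcal{E}$ and the choice of $S$ depend only on exploration-phase samples and on the algorithm's internal randomness. The exploitation reward $f_t(S)$ is drawn i.i.d.\ and independently of that past, given $S$. Conditioning on the history up to $T_\eta$ and using the tower property therefore replaces $f_t(S)$ by its mean $f(S)$.

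After that, I apply the robust-transfer bound to each of the at most $T-T_\eta\le T$ exploitation rounds. This yields the bound $T\delta C\,\mathrm{rad}$ of \cref{lem:exploit}. Substituting $\mathrm{rad}=\sqrt{\log T/(2m)}$ from \eqref{eq:rad} and adding the exploration bound gives \eqref{eq:cond-regret}.
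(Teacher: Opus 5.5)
Your proposal is correct and follows the paper's proof essentially step for step. It uses the same split of the horizon at $T_\eta=\eta m$, bounds the exploration window by $\eta m$ via \cref{lem:explore} and the exploitation window by $T\delta C\,\mathrm{rad}$ via \cref{lem:robust-transfer} and \cref{lem:exploit}, and finally substitutes $\mathrm{rad}=\sqrt{\log T/(2m)}$. Your pointwise bound on the exploration summands and your tower-property step replacing $f_t(S)$ by $f(S)$ under conditioning on $\mathcal{E}$ make explicit what the paper's appendix argument uses implicitly.
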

\begin{proof}[Proof sketch]
The proof follows by conditioning on $\mathcal{E}$ and decomposing the regret
into exploration and exploitation phases. See Appendix~\ref{app:conditional}.
\end{proof}

\begin{proposition}[Optimal exploration length m]\label{prp:mstar}
Let
\begin{equation}\label{eq:g}
  g(m)\ :=\ \eta m\ +\ T\delta C\,\sqrt{\frac{\log T}{2}}\,m^{-1/2}\qquad (m>0).
\end{equation}
Then the minimizer satisfies
\begin{equation}\label{eq:mstar}
  m^\star\ =\ \Big(\frac{T\delta C}{2\eta}\sqrt{\frac{\log T}{2}}\Big)^{\!2/3}.
\end{equation}
\end{proposition}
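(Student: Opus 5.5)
The plan is to treat $g$ as a one-variable function on $(0,\infty)$ of the form $g(m)=am+bm^{-1/2}$. Here
\[
a:=\eta>0,\qquad b:=T\delta C\sqrt{\tfrac{\log T}{2}}\ \ge 0 .
\]
Both constants are nonnegative because $\eta\ge 1$, $\delta,C>0$ and $T\ge 1$. For $T>1$ we have $b>0$; the degenerate case $T=1$ gives $b=0$, and there the infimum is approached as $m\to 0$. So I will assume $T\ge 2$. The approach is elementary calculus: show $g$ is strictly convex, locate its unique stationary point, and check that this point is the global minimizer.

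First I compute the derivatives:
\[
g'(m)=a-\tfrac{b}{2}m^{-3/2},\qquad g''(m)=\tfrac{3b}{4}m^{-5/2}>0 .
\]
Hence $g$ is strictly convex on $(0,\infty)$. It also tends to $+\infty$ at both ends: the term $bm^{-1/2}$ blows up as $m\downarrow 0$, and $am$ blows up as $m\to\infty$. Together these give a unique global minimizer, characterized by $g'(m)=0$.

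Solving $a=\tfrac{b}{2}m^{-3/2}$ gives
\[
m^{3/2}=\frac{b}{2a}=\frac{T\delta C}{2\eta}\sqrt{\frac{\log T}{2}}.
\]
Raising both sides to the power $2/3$ yields exactly the expression \eqref{eq:mstar}. As a sanity check for later use, I would note the value of $g$ at the minimizer. Using $b\,m^{\star-1/2}=2a\,m^\star$, we get $g(m^\star)=3\eta m^\star$, which scales as $\eta^{1/3}(T\delta C)^{2/3}(\log T)^{1/3}$ up to constants. This is consistent with the regret bound in \cref{tab:summaryofresults}.

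There is no real obstacle here. The only point requiring care is that $m$ must be an integer in the algorithm, so the true integer minimizer is $\lceil m^\star\rceil$ or $\lfloor m^\star\rfloor$. By convexity, replacing $m^\star$ with $\lceil m^\star\rceil$ changes $g$ by at most $\eta$, an additive lower-order term. The proposition concerns the continuous minimizer, so it suffices to remark this rather than prove it.
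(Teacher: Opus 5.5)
Your proposal is correct and follows essentially the same route as the paper: compute $g'$ and $g''$, observe $g''>0$ so $g$ is strictly convex, and solve $g'(m)=0$ to get $m^{3/2}=\frac{T\delta C}{2\eta}\sqrt{\log T/2}$. Your extra remarks are correct refinements the paper does not spell out: the degenerate case $T=1$ (where $\log T=0$ and no minimizer exists), the value $g(m^\star)=3\eta m^\star$, and the at most $\eta$ cost of rounding up to $\lceil m^\star\rceil$.
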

\begin{proof}[Proof sketch]
The result follows from a convexity argument.
A full derivation appears in Appendix~\ref{app:optm}.
\end{proof}

\begin{corollary}[conditional regret at $m^\star$]\label{cor:cond}
With \(m=\lceil m^\star\rceil\), the conditional regret obeys
\begin{equation}\label{eq:cond-rate}
  \E[\mathcal{R}(T)\mid \mathcal{E}]
  \ =\ \mathcal{O}\!\left(\delta^{2/3}\,\eta^{1/3}\,C^{2/3}\,T^{2/3}\,\log(T)^{1/3}\right).
\end{equation}
\end{corollary}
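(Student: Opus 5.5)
The plan is to evaluate the bound of \cref{thm:conditional} at the choice of $m$ from \cref{prp:mstar}. Write $K := \delta C\sqrt{\log T/2}$, so the right-hand side of \eqref{eq:cond-regret} is exactly $g(m) = \eta m + T K m^{-1/2}$ from \eqref{eq:g}. Setting $g'(m)=\eta - \tfrac12 T K m^{-3/2}=0$ gives $m^\star = (TK/(2\eta))^{2/3}$, in agreement with \eqref{eq:mstar}. Substituting, the two terms are of the same order:
\[
\eta m^\star = \eta^{1/3}(TK/2)^{2/3}, \qquad T K (m^\star)^{-1/2} = TK\,(2\eta)^{1/3}(TK)^{-1/3} = 2^{1/3}\eta^{1/3}(TK)^{2/3}.
\]
Hence $g(m^\star) = c\,\eta^{1/3}(TK)^{2/3}$ for an absolute constant $c$.

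Next I expand $K^{2/3} = \delta^{2/3}C^{2/3}(\log T/2)^{1/3}$. This gives $g(m^\star) = \mathcal{O}(\delta^{2/3}\eta^{1/3}C^{2/3}T^{2/3}\log(T)^{1/3})$, which is the rate claimed in \eqref{eq:cond-rate}.

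The only real issue is the ceiling, since $m$ must be an integer. I handle the two terms separately, using $m^\star \le m < m^\star+1$.
\begin{itemize}
\item The exploitation term $TKm^{-1/2}$ is decreasing in $m$, so it is at most its value at $m^\star$.
\item The exploration term satisfies $\eta\lceil m^\star\rceil \le \eta m^\star + \eta$.
\end{itemize}
So $g(\lceil m^\star\rceil) \le g(m^\star) + \eta$.

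The remaining step is to check that the additive $\eta$ is dominated by the main term. This holds whenever $m^\star \ge 1$, because then $\eta \le \eta m^\star \le g(m^\star)$, and $m^\star\ge 1$ is true for $T$ large enough relative to $\eta$ (as $m^\star$ grows like $T^{2/3}$). If $m^\star<1$, then $T$ is so small that the trivial bound $\mathcal{R}(T)\le T$ is already within the claimed order, since $T \le T^{2/3}\cdot T^{1/3}$ and $T^{1/3}\lesssim \eta^{1/3}(\delta C)^{2/3}$ in that regime. With this case split and the constants absorbed into the $\mathcal{O}$, the corollary follows.
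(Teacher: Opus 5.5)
Your proposal is correct and takes the same route as the paper, whose entire proof is to substitute \eqref{eq:mstar} into \eqref{eq:cond-regret} and simplify; your computation of the two balanced terms carries this out. You are more careful than the paper, which silently ignores the ceiling: your bound $g(\lceil m^\star\rceil)\le g(m^\star)+\eta$, together with the case split on $m^\star\ge 1$, closes that gap. (The small-$T$ case tacitly uses $\delta C\sqrt{\log T/2}\ge 1$, which holds since $\delta\ge 6$ and $T\ge 2$.)
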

\begin{proof}[Proof sketch]
Substitute \eqref{eq:mstar} into \eqref{eq:cond-regret} and simplify.
\end{proof}

\subsection{Final MA-CMAB Regret Bound for SWP}
We remove the conditioning on the clean event using total expectation and
obtain the final MA-CMAB regret bound.
\begin{theorem}[MA-CMAB regret for SWP]\label{thm:final}
With \(m=\lceil m^\star\rceil\) and \(\mathcal{E}\) as in \eqref{eq:clean-event},
\begin{equation}\label{eq:final}
  \E[\mathcal{R}(T)]
  \ =\ \mathcal{O}\!\left(\delta^{2/3}\,\eta^{1/3}\,C^{2/3}\,T^{2/3}\,\log(T)^{1/3}\right).
\end{equation}
\end{theorem}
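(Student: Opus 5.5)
The plan is to remove the conditioning in Corollary~\ref{cor:cond} via the law of total expectation over the clean event $\mathcal{E}$ and its complement $\bar{\mathcal{E}}$:
\[
\E[\mathcal{R}(T)] = \E[\mathcal{R}(T)\mid\mathcal{E}]\,\Prb(\mathcal{E}) + \E[\mathcal{R}(T)\mid\bar{\mathcal{E}}]\,\Prb(\bar{\mathcal{E}}).
\]

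\textbf{Clean event.} We bound $\Prb(\mathcal{E})\le 1$ and invoke Corollary~\ref{cor:cond} with $m=\lceil m^\star\rceil$. This already gives the target rate $\mathcal{O}(\delta^{2/3}\eta^{1/3}C^{2/3}T^{2/3}\log(T)^{1/3})$. Concretely, substituting $m^\star$ from \eqref{eq:mstar} into \eqref{eq:cond-regret} shows that both terms are of order $\eta^{1/3}(T\delta C)^{2/3}(\log T)^{1/3}$. The ceiling costs at most an additive $\eta$ in the exploration term, and this is lower order whenever $m^\star\ge 1$.

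\textbf{Unclean event.} Since $f_t(\cdot)\in[0,1]$ and $\alpha\le 1$, each round contributes at most $1$ to the regret. Hence $\E[\mathcal{R}(T)\mid\bar{\mathcal{E}}]\le T$ deterministically. By Lemma~\ref{lem:clean}, $\Prb(\bar{\mathcal{E}})\le 2\eta/T$, so this term contributes at most $2\eta$.

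\textbf{Main obstacle.} The step needing care is showing that the additive $2\eta$ (together with the $\eta$ from the ceiling) is absorbed into the main term. This holds whenever $\eta\le \eta^{1/3}(T\delta C)^{2/3}$, that is, $\eta^{2/3}\le (T\delta C)^{2/3}$, that is, $\eta\le T\delta C$. I would state this as the regime in which the bound is meaningful, assuming $T\ge \eta$. For smaller $T$ it suffices to note that $\mathcal{R}(T)\le T\le\eta$ trivially. In that case $T\le \eta^{1/3}T^{2/3}$, and since $\delta C\ge 1$ here ($\delta=4MN+2M$, $C\ge1$), $T$ is dominated by the stated rate once $\log T\ge 1$; the constant-size cases $T\le e$ are absorbed into the $\mathcal{O}(\cdot)$.

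\textbf{Robustness input.} One further check is that the robustness transfer used in Lemma~\ref{lem:exploit} is valid. Theorem~\ref{thm:CG:robust} requires $\epsilon\le 1/(MN)^2$, whereas here $\epsilon=C\,\mathrm{rad}=C\sqrt{\log T/(2m)}$, which is polynomially small in $T$ for $m=\lceil m^\star\rceil$. So for $T$ large enough the hypothesis holds. For the remaining finitely many small values of $T$, the trivial bound $\mathcal{R}(T)\le T$ is again absorbed into the constant.

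Combining the two cases yields \eqref{eq:final}.
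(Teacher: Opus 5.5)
Your proposal is correct and follows the paper's proof: the law of total expectation over $\mathcal{E}$, the conditional bound evaluated at $m=\lceil m^\star\rceil$, and the $T\cdot\frac{2\eta}{T}=2\eta$ contribution of $\bar{\mathcal{E}}$ absorbed into the leading term. The paper justifies that absorption only by appealing to ``polynomial horizons''. Your explicit condition $\eta\le T\delta C$ with the trivial-bound fallback for $T<\eta$, and your check of the $\epsilon\le 1/(MN)^2$ hypothesis of Theorem~\ref{thm:CG:robust} (which the paper never verifies, and whose small-$T$ fallback is valid with constants depending on $M,N$), are added care rather than a different route.
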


\begin{proof}[Proof sketch]
The result follows from the conditional bound, the optimal choice of $m$,
and a union bound over the complement of the clean event.
Full details appear in ~\cref{app:final}.
\end{proof}
\section{Conclusion}
We introduced MA-CMAB, a unified framework for fair division of indivisible goods with submodular valuations under bandit feedback, and established robustness of offline welfare maximization under noisy value oracles. Leveraging this resilience, we obtained an online regret guarantee of $\tilde{\mathcal{O}}(T^{2/3})$ against a $(1-1/e)$ approximation benchmark, matching the best known achievable rate for monotone submodular CMAB with bandit feedback. An important open direction is to characterize lower bounds that explicitly capture the multi-agent dimension. Future work includes tightening constants, reducing oracle complexity, and extending the framework to richer feedback and non-stationary settings.
\bibliography{icml2026}
\bibliographystyle{icml2026}



\appendix
\onecolumn
\begin{center}
\Large
\textbf{ Appendix}
\end{center}
\section{Resilience of Continuous Greedy under Noisy Oracles}
\label{app:cg-resilience}

The proofs of Lemmas~\ref{lem:cg-upper-bound} and~\ref{lem:noisy-progress}
from the main text are provided below for completeness.

\begin{lemma}\label{app:lem:cg-upper-bound}
Let \( y \in [0,1]^X \) be a fractional solution and let \( R \) denote a random set corresponding to \(\hat{y}\), obtained by sampling each element \( j \in X \) independently with probability \( y_j \).  
Let \(\mathrm{OPT} = \max_{S \in \mathcal{I}} f(S)\).  
If all oracle evaluations are noisy with additive error at most \(\varepsilon\), i.e.
\[
|\hat{f}(S) - f(S)| \leq \varepsilon \quad \forall S \subseteq X,
\]
then, the true optimum satisfies the following upper bound:
\[
f(\textsc{OPT}) \le F(y) + \max_{I \in \mathcal{I}} \sum_{j \in I} \mathbb{E}[f_R(j)] + (4MN + 2M)\varepsilon,
\]
where:
\begin{itemize}
    \item $F(y)$ is the true multilinear extension of $f$ evaluated at $y$,
    \item $f_R(j) := f(R \cup \{j\}) - f(R)$ is the true marginal contribution of $j$ under random set $R$,
    \item and $\mathcal{I}$ is the family of feasible integral solutions (e.g., a partition matroid).
\end{itemize}
\end{lemma}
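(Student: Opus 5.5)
Since the left-hand side involves only true quantities, the noise term will be an upper-bound slack rather than an estimation error. The plan has two stages. First I prove the clean inequality $f(\OPT)\le F(y)+\max_{I\in\mathcal{I}}\sum_{j\in I}\mathbb{E}[f_R(j)]$. Then I show why adding a nonnegative $(4MN+2M)\varepsilon$ remains valid. I will also track how the estimated quantities enter, so the lemma can be used with noisy marginals in Lemma~\ref{lem:noisy-progress}.

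\textbf{Stage 1: the noiseless inequality.} Fix $\OPT\in\mathcal{I}$ and a realization of $R$. By monotonicity, $f(\OPT)\le f(R\cup\OPT)$. Order the elements of $\OPT\setminus R$ and telescope. Submodularity then bounds each increment by the singleton marginal $f_R(j)$, which gives
\[
f(\OPT)\le f(R)+\sum_{j\in\OPT} f_R(j).
\]
For $j\in R$ the marginal is $0$, so including those terms changes nothing. Because $f=\sum_i w_i(s_i)$ decomposes over agents, this step can be carried out agent by agent on the bundles $o_i$ of $\OPT$. Next I take expectation over $R$ and use $\mathbb{E}[f(R)]=F(y)$. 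Finally, since $\OPT\in\mathcal{I}$, the sum over $\OPT$ is at most the maximum over all $I\in\mathcal{I}$.

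\textbf{Stage 2: accounting for noise.} The algorithm only ever sees $\hat F(y)$ and the estimated marginals $\hat\omega_j=\hat f(R\cup\{j\})-\hat f(R)$. I will rewrite the bound in terms of these and bound the replacement errors.
\begin{itemize}
\item The term $F(y)$ is a sum of $M$ agent values, each evaluated with error at most $\varepsilon$. It therefore contributes at most $M\varepsilon$, and counting both directions of the swap gives at most $2M\varepsilon$.
\item Each marginal uses two oracle calls, so it carries error at most $2\varepsilon$.
\item An independent set in the partition matroid has at most $N$ elements. Bounding crudely over all $MN$ agent--item pairs, the marginal sum carries error at most $2MN\varepsilon$. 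A second swap, from estimated back to true weights when comparing the greedy $I$ with $\OPT$, doubles this to $4MN\varepsilon$.
\end{itemize}
Adding these yields $(4MN+2M)\varepsilon$. Since this quantity is nonnegative, the stated inequality with true $F$ and true $f_R$ follows a fortiori from Stage 1. The noisy form obtained along the way is the version Lemma~\ref{lem:noisy-progress} actually needs.

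\textbf{Main obstacle.} The hard part is not the inequality itself. It is setting up the error accounting so that the constant comes out as exactly $4MN+2M$, and so that no assumption of monotonicity or submodularity of $\hat f$ is used. Submodularity must be invoked only on the true $f$, with all noise pushed into additive swaps. I would first try charging every marginal over all $MN$ pairs instead of only $|I|\le N$ elements, since that crude count is what produces the $MN$ factor.
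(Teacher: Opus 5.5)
Your proposal is correct and follows essentially the same route as the paper. Both prove the noiseless bound $f(\OPT)\le F(y)+\max_{I\in\mathcal{I}}\sum_{j\in I}\mathbb{E}[f_R(j)]$ via submodularity and $\mathbb{E}[f(R)]=F(y)$, then do the $M\varepsilon$ and $2MN\varepsilon$ swap accounting, doubled to $(4MN+2M)\varepsilon$. Two things make your version cleaner than the paper's: you invoke monotonicity explicitly for $f(\OPT)\le f(R\cup\OPT)$, and you observe that the stated inequality follows a fortiori from the nonnegative slack, so the paper's ``reverse the inequalities'' step is not needed for the lemma as stated.
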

\begin{proof}
Fix an optimal solution \(\mathrm{OPT} \in \mathcal{I}\). By submodularity, for any \(R\):
\[
f(\mathrm{OPT}) \;\leq\; f(R) + \sum_{j \in \mathrm{OPT}} \big(f(R \cup \{j\}) - f(R)\big).
\]
Taking expectation over random \(R \sim y\), we obtain
\[
f(\mathrm{OPT}) \;\leq\; \mathbb{E}[f(R)] + \sum_{j \in \mathrm{OPT}} \mathbb{E}[f_R(j)].
\]
Since \(\mathbb{E}[f(R)] = F(y)\), this gives
\begin{equation}\label{eq:opt_exact}
f(\mathrm{OPT}) \;\leq\; F(y) + \max_{I \in \mathcal{I}} \sum_{j \in I} \mathbb{E}[f_R(j)].
\end{equation}

\medskip
\textbf{Relating true and noisy multilinear extension:}  
Because each evaluation of \(f(S)\) may deviate by at most \(M\varepsilon\). For each agent the noisy value oracle \(\hat{w}_i\) return estimates of exact value oracle \(w_i(s_i)\) with bounded error:
\[
|\hat{w}_i(s_i) - w_i(s_i)| \le \varepsilon \quad \text{for all } s_i\in S, S \subseteq Q.
\]
\[
|\hat{f}(S) - f(S)| \le |\hat{F}(S) - F(S)| \le M\varepsilon \quad \text{for all } S \subseteq Q.
\]
Then for any feasible fractional  allocation, 
\[|F(y) - \hat{F}(y)| \leq M\varepsilon.\]
Thus,
\begin{equation}\label{eq:multilinear_noise}
F(y) \;\leq\; \hat{F}(y) + M\varepsilon.
\end{equation}

\medskip
\textbf{Relating true and noisy marginals:}  
For any marginal contribution,
\[
\big(f(R \cup \{j\}) - f(R)\big),
\]
if both oracle calls incur error at most \(\varepsilon\), then
\[
\left| \big(\hat{f}(R \cup \{j\}) - \hat{f}(R)\big) - \big(f(R \cup \{j\}) - f(R)\big) \right| \leq 2\varepsilon.
\]
Hence,
\[
\mathbb{E}[f_R(j)] \;\leq\; \mathbb{E}[\hat{f}_R(j)] + 2\varepsilon.
\]
Summing over at most \(N\) items per block and \(M\) blocks yields
\begin{equation}\label{eq:marginals_noise}
\sum_{j \in I} \mathbb{E}[f_R(j)] \;\leq\; \sum_{j \in I} \mathbb{E}[\hat{f}_R(j)] + 2MN\varepsilon.
\end{equation}

\medskip
Plugging \eqref{eq:multilinear_noise} and \eqref{eq:marginals_noise} into \eqref{eq:opt_exact}, we obtain
\[
f(\mathrm{OPT}) \;\leq\; \hat{F}(y) + M\varepsilon + \max_{I \in \mathcal{I}} \Bigg(\sum_{j \in I} \mathbb{E}[\hat{f}_R(j)] + 2MN\varepsilon\Bigg).
\]
Simplifying,
\[
f(\mathrm{OPT}) \;\leq\; \hat{F}(y) + \max_{I \in \mathcal{I}} \sum_{j \in I} \mathbb{E}[\hat{f}_R(j)] + (2MN+M)\varepsilon.
\]

We now reverse the inequalities and substitute, in order to express a fully clean bound in terms of \emph{true} values \( F(y) \) and \( f_R(j) \) from \cref{eq:multilinear_noise,eq:marginals_noise},  we obtain:
\[
f(\mathrm{OPT}) 
\le F(y) + \sum_{j \in I} \mathbb{E}[f_R(j)] + (4MN + 2M)\varepsilon.
\]
Maximizing over all feasible sets \( I \in \mathcal{I} \), we conclude:
\[
f(\mathrm{OPT}) \le F(y) + \max_{I \in \mathcal{I}} \sum_{j \in I} \mathbb{E}[f_R(j)] + (4MN + 2M)\varepsilon.
\]
This completes the proof.
\end{proof}

\begin{lemma}[Noisy Resilient Bound for $f$]
\label{app:lem:noisy-progress}
Let $y \in P(\mathcal{M})$ be the fractional solution returned by the
Continuous Greedy Algorithm using noisy marginal estimates with additive
error $\varepsilon$. Let $S \in \mathcal{I}$ be the integral independent
set obtained by pipage rounding. Then, with high probability,
\begin{align}
f(S) 
&\ge (1 - o(1)) F(y) \label{eq:noisy:rounding} \\
&\ge \left( 1 - \frac{1}{e} - o(1) \right) f(\mathrm{OPT})
      - (4MN + 2M)\varepsilon.
      \label{eq:noisy:final}
\end{align}
where $F(y)$ is the multilinear extension of $f$ evaluated at $y$, and the
$o(1)$ term can be made arbitrarily small in the problem size $N$.
\end{lemma}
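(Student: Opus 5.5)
The proof has two parts: a rounding step \eqref{eq:noisy:rounding} and a progress step \eqref{eq:noisy:final}.

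\textbf{Rounding step.} Inequality \eqref{eq:noisy:rounding} is immediate from Lemma~\ref{lemma:PR}. We apply pipage rounding to the final fractional point $y=y(1)\in P(\mathcal{M})$. This is valid because $y(1)$ is an average of $1/\lambda$ indicator vectors of independent sets (one agent per item at every step), so it lies in the partition matroid polytope. Rounding uses the true $f$, or equivalently is oblivious to the oracle noise, so no $\varepsilon$ enters here.

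\textbf{Progress step.} The substance is $F(y(1))\ge(1-1/e-o(1))f(\OPT)-(4MN+2M)\varepsilon$. We run the standard continuous-greedy recurrence with the noisy upper bound of Lemma~\ref{app:lem:cg-upper-bound} in place of the exact one.

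\emph{One-step gain.} At time $t$, let $I(t)$ be the independent set chosen using the noisy weights $\hat\omega_{ij}(t)$. Multilinearity of $F$ gives
\[
F(y(t+\lambda))-F(y(t))\ \ge\ \lambda\sum_{j\in I(t)}\E[f_{R(t)}(j)]-O(\lambda^2(MN)^2),
\]
where the second-order term is controlled because $f\in[0,1]$ and at most $MN$ coordinates move.

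\emph{Transfer to the optimal set.} Since $I(t)$ maximizes the noisy weights, $\sum_{j\in I(t)}\hat\omega_j\ge\max_I\sum_{j\in I}\hat\omega_j$. Each $\hat\omega_j$ is within $2\varepsilon$ of the true marginal, so
\[
\sum_{j\in I(t)}\omega_j\ \ge\ \max_I\sum_{j\in I}\omega_j-4N\varepsilon.
\]
Here $|I|\le N$ for the partition matroid, so the factor is $4N$ per step. Combining this with Lemma~\ref{app:lem:cg-upper-bound} gives the differential inequality
\[
F(y(t+\lambda))-F(y(t))\ \ge\ \lambda\bigl(f(\OPT)-F(y(t))-\delta\varepsilon\bigr)-O(\lambda^2(MN)^2).
\]
The intended constant is $\delta=4MN+2M$, and the $4N\varepsilon$ term should be absorbed into it.

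\emph{Solving the recurrence.} Set $g(t)=f(\OPT)-\delta\varepsilon-F(y(t))$, so that $g(t+\lambda)\le(1-\lambda)g(t)+O(\lambda^2(MN)^2)$. Iterating $1/\lambda$ times yields
\[
g(1)\le e^{-1}g(0)+O(\lambda(MN)^2).
\]
Since $F(y(0))=0$, this gives
\[
F(y(1))\ge(1-1/e)\bigl(f(\OPT)-\delta\varepsilon\bigr)-O(\lambda(MN)^2).
\]
The factor $(1-1/e)\le1$ on the noise term only helps. The discretization error $O(\lambda(MN)^2)$ becomes $o(1)\cdot f(\OPT)$ after choosing $\lambda$ small enough, using the standard lower bound $f(\OPT)\ge\max_j f(\{j\})$ so that it is a relative error. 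We would choose $\lambda$ somewhat smaller than the default $1/(MN)^2$ to make this go through. Monte Carlo sampling error is folded into the same $o(1)$ by Chernoff bounds with $(MN)^5$ samples, holding with high probability. Finally, multiplying by $(1-o(1))$ from the rounding step only rescales, since $\delta\varepsilon$ is subtracted.

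\textbf{Expected obstacle.} The delicate point is the constant bookkeeping. Lemma~\ref{app:lem:cg-upper-bound} already charges $(4MN+2M)\varepsilon$, and the greedy-selection step adds a further $O(N\varepsilon)$. Achieving exactly $\delta=4MN+2M$ requires either reusing the noisy-form bound from inside the proof of Lemma~\ref{app:lem:cg-upper-bound}, whose constant is $2MN+M$, before converting to true values, or accepting a slightly larger constant. I would first try the noisy-form route: the remaining budget of $2MN+M$ comfortably covers the $4N$ selection error whenever $M\ge2$, and the case $M=1$ needs a separate check.
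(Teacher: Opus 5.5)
Your proposal follows the paper's proof. You use Lemma~\ref{lemma:PR} for the rounding inequality, then the one-step continuous-greedy gain, the max-weight selection under noisy weights, the upper bound of Lemma~\ref{app:lem:cg-upper-bound}, and you unroll the gap recurrence over $1/\lambda$ steps. Your shifted gap $f(\OPT)-\delta\varepsilon-F(y(t))$ is equivalent to the paper's $G(t)$ with the geometric sum bounded by $1$, and it even gives the slightly better noise term $(1-1/e)\delta\varepsilon$.

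The constant problem you flag is not a real obstacle, and you do not need a separate case $M=1$. The first inequality in the proof of Lemma~\ref{app:lem:cg-upper-bound}, $f(\OPT)\le F(y)+\max_{I\in\mathcal{I}}\sum_{j\in I}\E[f_R(j)]$, uses no oracle at all. So the only $\varepsilon$-cost in the recurrence is your selection loss $4N\varepsilon$, and $4N\varepsilon\le(4MN+2M)\varepsilon$ for every $M\ge 1$.

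The noisy-form route works just as well if you count correctly. There the noisy maximum is attained by $I(t)$ itself, so on top of $(2MN+M)\varepsilon$ you need only one conversion from noisy to true marginals on $I(t)$ (cost $2N\varepsilon$) plus the conversion $\hat F\to F$ (cost $M\varepsilon$). You do not need a further $4N\varepsilon$. Read this way, the paper's passage from $(2MN+M)$ to $(4MN+2M)$ is exactly where the selection error is absorbed.

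A smaller point concerns the discretization error. Bounding the second-order term additively by $O(\lambda^2(MN)^2)$ per step accumulates to $O(\lambda(MN)^2)$. At the default $\lambda=1/(MN)^2$ this is only $O(1)$, so it forces a smaller step size and inflates the oracle count $\eta=(MN)^8$ used in Theorem~\ref{thm:CG:robust}. If you keep the loss multiplicative as the paper does, $\Pr[D(t)=\{j\}]\ge\lambda(1-|I(t)|\lambda)$ with $|I(t)|\le N$, the total discretization loss is $e^{-(1-N\lambda)}-e^{-1}=O(N\lambda)$. That is already $o(1)$ at the default step.
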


\begin{proof}[Proof Sketch]
Start from a fractional solution $y(t)$.  
Let $R(t)$ be a random set drawn according to $y(t)$.  
Define the increment $\Delta(t) = y(t+\lambda) - y(t)$ and let $D(t)$ denote
the corresponding set.  
We want to bound
\begin{equation}
\label{eq:noisy:increment}
F(y(t+\lambda)) - F(y(t)).
\end{equation}

\textbf{Monotonicity bound.}
Since $R(t+\lambda)$ stochastically dominates $R(t)$, we obtain
\begin{equation}
\label{eq:noisy:mono}
F(y(t+\lambda)) - F(y(t))
\ge \mathbb{E}[f(R(t)\cup D(t)) - f(R(t))].
\end{equation}

\textbf{Expansion via singleton contributions.}
Because $\lambda$ is small, with high probability only singletons are added.
Thus,
\begin{equation}
\label{eq:noisy:singleton}
\mathbb{E}[f(R(t)\cup D(t)) - f(R(t))]
= \sum_{j} \Pr[D(t)=\{j\}] \cdot \mathbb{E}[f_{R(t)}(j)].
\end{equation}

\textbf{Use of maximum-weight independent set.}
$I(t)$ is chosen as the maximum-weight independent set under the matroid,
using weights $\omega_j(t)$. Therefore,
\begin{equation}
\label{eq:noisy:mw}
F(y(t+\lambda)) - F(y(t))
\ge \lambda(1 - M\lambda) \sum_{j\in I(t)} \mathbb{E}[f_{R(t)}(j)].
\end{equation}

\textbf{Noisy marginal estimates.}
We only observe noisy weights $\hat{\omega}_j(t)$ with
\begin{equation}
\label{eq:noisy:noise}
\hat{\omega}_j(t) = \omega_j(t) \pm \varepsilon.
\end{equation}
By Chernoff bounds (given $(MN)^5$ samples), the estimation error in
$\hat{\omega}_j(t)$ is small. Replacing true marginals with noisy ones
introduces an additive error of at most $2MN\varepsilon\lambda$.

Combining the above, we obtain
\begin{align}
F(y(t+\lambda)) - F(y(t))
&\ge \lambda (1 - M \lambda)
\Big( f(\mathrm{OPT}) - F(y(t)) - (4MN + 2M) \varepsilon \Big).
\label{eq:noisy:progress}
\end{align}

\textbf{Optimality Gap.}
Let
\begin{equation}
\label{eq:noisy:gap}
G(t) = f(\mathrm{OPT}) - F(y(t))
\end{equation}
denote the gap between the optimal value and the multilinear extension.
The recurrence becomes
\begin{align}
G(t+\lambda)
&\leq G(t) \left(1 - \lambda(1 - M\lambda)\right)
      + (4MN + 2M)\lambda(1 - M\lambda)\, \varepsilon \notag \\
&= G(t) \left(1 - (\lambda - M\lambda^2)\right)
   + (4MN + 2M)(\lambda - M\lambda^2)\, \varepsilon.
\label{eq:noisy:recurrence}
\end{align}

Define
\begin{equation}
\label{eq:noisy:rho}
\rho = \lambda(1 - M\lambda),
\end{equation}
which captures the effective progress factor per iteration. Then
\begin{equation}
\label{eq:noisy:recurrence-simple}
G(t + \lambda) \leq G(t)(1 - \rho) + (4MN + 2M)\rho\, \varepsilon.
\end{equation}

\textbf{Unroll recurrence.}
Over $k = 1/\lambda$ steps,
\begin{equation}
\label{eq:noisy:unroll}
G(1) \le (1 - \rho)^k G(0)
+ \sum_{i=0}^{k-1} (1 - \rho)^i (4MN + 2M)\rho \varepsilon.
\end{equation}
Since $(1 - \rho)^k \le e^{-\rho k} = e^{-(1 - M\lambda)} \le \frac{1}{e} + o(1)$
and the geometric sum is at most $1$, we obtain
\begin{equation}
\label{eq:noisy:gap-final}
G(1) \le \frac{1}{e} f(\mathrm{OPT}) + (4MN + 2M)\varepsilon.
\end{equation}

\textbf{Final bound.}
Hence,
\begin{equation}
\label{eq:noisy:F-final}
F(y(1))
\ge \left(1 - \frac{1}{e} \right) f(\mathrm{OPT}) - (4MN + 2M)\varepsilon.
\end{equation}
\end{proof}

\section{Concentration and the Clean Event}
\label{app:clean}

\begin{lemma}[Hoeffding's inequality]\label{app:lem:hoeffding}
Let $X_1,\dots,X_m$ be independent random variables taking values in $[0,1]$,
with common mean $\mu$, and let
\[
\bar X := \frac{1}{m}\sum_{k=1}^m X_k.
\]
Then for any $\epsilon>0$,
\[
\Pr\!\left(|\bar X-\mu|\ge\epsilon\right)
\le 2\exp(-2m\epsilon^2).
\]
\end{lemma}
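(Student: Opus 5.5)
The plan is the standard Chernoff--Hoeffding argument for bounded i.i.d. averages. Let $X_1,\dots,X_m$ be independent with values in $[0,1]$ and common mean $\mu$, and set $S_m=\sum_{k=1}^m (X_k-\mu)$. We bound the two tails $\Pr(\bar X-\mu\ge\epsilon)$ and $\Pr(\bar X-\mu\le-\epsilon)$ separately, each by $\exp(-2m\epsilon^2)$. A union bound over the two one-sided events then gives the stated factor $2$.

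\emph{Upper tail.} For any $s>0$, Markov's inequality applied to $e^{sS_m}$ gives
\[
\Pr(\bar X-\mu\ge\epsilon)=\Pr(S_m\ge m\epsilon)\le e^{-sm\epsilon}\,\E\bigl[e^{sS_m}\bigr]=e^{-sm\epsilon}\prod_{k=1}^m \E\bigl[e^{s(X_k-\mu)}\bigr].
\]
The product form uses independence of the $X_k$.

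\emph{Hoeffding's lemma.} This is the key step, and the only one with real content. For a random variable $Y=X_k-\mu$ with mean zero and values in $[a,b]$, where $b-a=1$, we need $\E[e^{sY}]\le e^{s^2/8}$. The route is as follows.
\begin{enumerate}
\item By convexity of $y\mapsto e^{sy}$ on $[a,b]$, we have $e^{sy}\le \frac{b-y}{b-a}e^{sa}+\frac{y-a}{b-a}e^{sb}$.
\item Taking expectations and using $\E Y=0$ gives $\E[e^{sY}]\le e^{L(h)}$, where $h=s(b-a)$, $p=-a/(b-a)$, and $L(h)=-hp+\log(1-p+pe^{h})$.
\item Direct computation shows $L(0)=L'(0)=0$.
\item The second derivative is $L''(h)=q(1-q)$ for some $q\in[0,1]$, so $L''(h)\le 1/4$.
\item Taylor's theorem then yields $L(h)\le h^2/8$.
\end{enumerate}
This lemma is the main obstacle; everything else is routine.

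\emph{Optimization.} Combining the two steps gives $\Pr(\bar X-\mu\ge\epsilon)\le \exp(-sm\epsilon+ms^2/8)$. Choosing $s=4\epsilon$ minimizes the exponent and gives $\exp(-2m\epsilon^2)$. The lower tail follows by applying the same argument to $1-X_k$, which again takes values in $[0,1]$ and has mean $1-\mu$. Adding the two one-sided bounds gives $\Pr(|\bar X-\mu|\ge\epsilon)\le 2\exp(-2m\epsilon^2)$.
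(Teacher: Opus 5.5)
Your proof is correct: the paper states this lemma without proof as the classical Hoeffding inequality, and your Chernoff bound plus Hoeffding's lemma argument (optimizing at $s=4\epsilon$ and reflecting $X_k\mapsto 1-X_k$ for the lower tail) is exactly the standard proof the paper implicitly relies on. You open by mentioning i.i.d.\ averages, but your argument uses only independence, the range $[0,1]$ of each variable, and the common mean $\mu$, which are exactly the hypotheses as stated.
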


\begin{lemma}[Probability of the clean event]\label{app:lem:clean}
Let
\[
\mathrm{rad} := \sqrt{\frac{\log T}{2m}},
\qquad
\mathcal{E}
:= \bigcap_{j=1}^{\eta}
\Big\{|\bar f(S_j)-f(S_j)| < C\,\mathrm{rad}\Big\}.
\]
Then
\[
\Pr(\mathcal{E}) \ge 1-\frac{2\eta}{T}.
\]
\end{lemma}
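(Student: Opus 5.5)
The plan is Hoeffding's inequality (Lemma~\ref{app:lem:hoeffding}) applied to each explored action, followed by a union bound over the $\eta$ explored actions. Fix an index $j\in\{1,\dots,\eta\}$. The $m$ plays of $S_j$ during exploration produce rewards $f_{t_{j,1}}(S_j),\dots,f_{t_{j,m}}(S_j)$. These are i.i.d. across rounds for the fixed action $S_j$, take values in $[0,1]$, and have common mean $f(S_j)$, all by the online protocol of Section~\ref{sec:prob_state}.

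One subtlety needs care. The identity of $S_j$ is chosen adaptively by the offline subroutine from earlier empirical means, so $S_j$ is itself random. To handle this, I would use the standard reward-table (stack-of-rewards) argument. Conceptually, for every feasible action $A$, pre-draw an i.i.d. sequence of $m$ rewards, and say that the $\ell$-th play of $A$ reveals the $\ell$-th entry. The empirical mean of whichever action is queried $j$-th is then computed from a fixed i.i.d. column. The cleaner route is to condition on the history $\mathcal{F}_{T_{j-1}}$ up to the start of the $j$-th block: given this history, $S_j$ is determined and its $m$ fresh rewards are independent of the past. Hoeffding then applies conditionally, and the resulting bound is uniform in the history, so it also holds unconditionally.

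With $\epsilon=\mathrm{rad}=\sqrt{\log T/(2m)}$, Hoeffding gives
\[
\Pr\bigl(|\bar f(S_j)-f(S_j)|\ge \mathrm{rad}\bigr)\le 2\exp(-2m\,\mathrm{rad}^2)=2\exp(-\log T)=\frac{2}{T}.
\]
In the aggregated model the threshold is $C\,\mathrm{rad}$ with $C\ge 1$, and the corresponding event is only larger, so the same bound $2/T$ still holds. This monotonicity in the threshold is why the statement can use $C\,\mathrm{rad}$ without changing the probability.

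Finally, take a union bound over $j=1,\dots,\eta$ on the complements:
\[
\Pr(\mathcal{E}^c)\le\sum_{j=1}^{\eta}\frac{2}{T}=\frac{2\eta}{T},
\]
which gives $\Pr(\mathcal{E})\ge 1-2\eta/T$.

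The main obstacle is justifying Hoeffding for adaptively selected actions, handled by the conditioning/reward-table argument above. A second point: if one views each agent's term as a separate estimate, it might seem that a union bound over $M\eta$ agentwise events is needed. Instead, I would apply Hoeffding directly to the aggregate reward $\bar f(S_j)\in[0,1]$, which requires only $\eta$ events and matches the stated bound.
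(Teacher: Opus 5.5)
Your proposal is correct and matches the paper's proof. Hoeffding at radius $\mathrm{rad}$ gives $2/T$ per explored action, and a union bound over the $\eta$ actions gives $2\eta/T$. You handle the $C\,\mathrm{rad}$ threshold by monotonicity (since $C\ge 1$), which is cleaner than the paper's componentwise triangle-inequality step. Your conditioning on the history before each block also properly justifies Hoeffding for the adaptively chosen $S_j$, a point the paper leaves implicit.
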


\begin{proof}
Fix an explored action $S_j$.
By construction, $\bar f(S_j)$ is the empirical mean of $m$ independent
random variables in $[0,1]$ with expectation $f(S_j)$.
Applying Lemma~\ref{app:lem:hoeffding} with $\epsilon=\mathrm{rad}$ yields
\[
\Pr\!\left(|\bar f(S_j)-f(S_j)| \ge \mathrm{rad}\right)
\le 2\exp(-2m\mathrm{rad}^2)
= \frac{2}{T}.
\]

If the objective aggregates $C$ bounded components, then by the triangle
inequality,
\[
|\bar f(S_j)-f(S_j)|
\le \sum_{c=1}^C |\bar f_c(S_j)-f_c(S_j)|.
\]
Applying Hoeffding's inequality to each component and union bounding over
the $C$ components yields
\[
\Pr\!\left(|\bar f(S_j)-f(S_j)| \ge C\,\mathrm{rad}\right)
\le \frac{2}{T^{C^2}}
\le \frac{2}{T}.
\]

Finally, applying a union bound over the $\eta$ explored actions gives
\[
\Pr(\bar{\mathcal{E}})
\le \sum_{j=1}^{\eta} \frac{2}{T}
= \frac{2\eta}{T},
\]
which completes the proof.
\end{proof}

\section{Robust Approximation Transfer}
\label{app:robust}

\begin{lemma}[Robust transfer]\label{app:lem:transfer}
Let $\tilde f$ be an empirical objective satisfying
\[
\|\tilde f-f\|_\infty := \sup_A |\tilde f(A)-f(A)| \le \epsilon
\]
over all actions evaluated by the offline subroutine.
If the subroutine is $(\alpha,\delta)$-robust, then its output $S$ satisfies
\[
\E[f(S)] \ge \alpha f(\OPT)-\delta\epsilon.
\]
\end{lemma}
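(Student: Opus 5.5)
The plan is to treat the empirical objective $\tilde f$ as the approximate value oracle $\hat f$ in \cref{def:robustness} and then apply the resilience guarantee directly. Resilience only constrains the oracle on the sets the algorithm actually queries, so the only real work is to justify that substitution.

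First I will fix the realization of all exploration samples. Conditioned on the queried actions $S_1,\dots,S_\eta$, the hypothesis gives $|\tilde f(S_j)-f(S_j)|\le\epsilon$ for every $j$.

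Second, I will construct an auxiliary oracle $\hat f$ on all of $\Omega$. It equals $\tilde f$ on the queried sets and equals $f$ on every other set, so $|\hat f(S)-f(S)|\le\epsilon$ holds everywhere. (If one insists on the strict inequality of \cref{def:robustness}, replace $\epsilon$ by $\epsilon+\gamma$ and let $\gamma\downarrow 0$ at the end.)

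Third, I will argue that the run of $\mathcal{A}{lg}$ with oracle $\tilde f$ is distributed identically to its run with oracle $\hat f$. The algorithm's trajectory depends only on the answers to the queries it makes, and the two oracles agree on those answers. This should be made rigorous by induction on the query index, coupling the algorithm's internal randomness (the sampling of $R(t)$ in \cref{alg:continuous-greedy-boxed} and the pipage rounding of \cref{lemma:PR}).

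Fourth, I will invoke the $(\alpha,\delta)$-robustness property \eqref{eq:robust}, or the concrete instance \cref{thm:CG:robust}, for the oracle $\hat f$. This yields $\E[f(S)]\ge\alpha f(\OPT)-\delta\epsilon$, where the expectation is over the algorithm's internal randomness. That bound then transfers to the run that used $\tilde f$.

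The main obstacle is the adaptivity of the queries in the third step. The set of evaluated actions is itself random and depends on earlier noisy answers, so "the actions the subroutine evaluates" is not a fixed set. I will handle this by the pathwise coupling just described: along every sample path, the $\tilde f$-run and the $\hat f$-run make the same queries and receive the same answers. This is a legitimate argument only if the hypothesis $\|\tilde f-f\|_\infty\le\epsilon$ is read as holding for whichever actions get queried, which is exactly what the clean event $\mathcal{E}$ supplies.

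Finally, for \eqref{eq:transfer-gap}, I will use that on $\mathcal{E}$ each queried $S_j$ satisfies $|\bar f(S_j)-f(S_j)|<C\,\mathrm{rad}$ by \cref{def:clean}. Setting $\epsilon=C\,\mathrm{rad}$ and rearranging the inequality above gives the stated bound.
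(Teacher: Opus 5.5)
Your overall strategy matches the paper's. The paper plugs $\tilde f$ into the resilience definition in one line: it asserts that the robustness inequality holds for \emph{every} realization of the subroutine's internal randomness, then averages. The gap is in the extension-and-coupling step you add to bridge ``accurate on the evaluated actions'' and ``accurate on all of $\Omega$''. Your $\hat f$ is built from the \emph{realized} queries. For continuous greedy these depend on the internal randomness $U$ (the Monte Carlo draws of $R(t)$), so $\hat f=\hat f_U$ is not a fixed oracle. \cref{def:robustness} only bounds $\E_{U'}[f(S(U',\hat f))]$ for a \emph{fixed} $\hat f$, averaged over fresh randomness $U'$. Under $U'\ne U$, the $\hat f_U$-run queries other sets, where $\hat f_U=f\ne\tilde f$, so it no longer tracks the $\tilde f$-run. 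Your identity $S(U,\hat f_U)=S(U,\tilde f)$ holds only on the diagonal, and bounds on $\E_{U'}[f(S(U',\hat f_U))]$ say nothing about $\E_U[f(S(U,\hat f_U))]$. Step~3 is in fact not well-posed, since $\hat f$ is defined from the very run it is supposed to drive. The coupling does work when the query sequence is a function of the oracle's answers alone, with internal randomness used only after the last query. Continuous greedy is not of that form.

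This is not a cosmetic issue: the conditional conclusion \eqref{eq:transfer-gap} can fail under the in-expectation definition. Take a subroutine that flips a fair coin. On heads it queries $A$ and outputs $\OPT$ (with $f(\OPT)=1$); on tails it queries $B$ and outputs $\emptyset$. It is resilient with $\alpha=1/2$ and $\delta=0$. If $\tilde f(B)$ is always accurate but $\tilde f(A)$ is accurate only with probability $1-p$, then $\E[f(S)\mid\mathcal E]=(1-p)/(2-p)<1/2$, because conditioning on the clean event biases the coin. To repair the argument you need one of three things.
\begin{enumerate}
\item A per-realization guarantee. This is what the paper's proof tacitly assumes, although it is stronger than \cref{def:robustness}. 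With it your construction is exactly right: fix $U$, build $\hat f_U$, apply the bound pathwise, then average.
\item An extension that does not depend on $U$, obtained from a reward tape with repeated queries answered from a cache. Set $\hat f=\tilde f$ wherever the tape is $\epsilon$-accurate and $\hat f=f$ elsewhere. This yields $\E[f(S)\mathbf{1}_{\mathcal E}]\ge\alpha f(\OPT)-\delta\epsilon-\Pr(\bar{\mathcal E})$. That costs only an additive $O(\eta)$ in \cref{thm:final}, but it is not the conditional statement you set out to prove.
\item Accuracy on every action the subroutine could query under any $U$. This needs a union bound over far more than $\eta$ sets.
\end{enumerate}
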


\begin{proof}
By the definition of $(\alpha,\delta)$-robustness, whenever the input
objective $\tilde f$ deviates uniformly from the true objective $f$ by at
most $\epsilon$, the subroutine returns an action $S$ such that
\[
f(S) \ge \alpha f(\OPT)-\delta\epsilon.
\]
If the subroutine is randomized, this inequality holds for every
realization of its internal randomness. Taking expectation over this
randomness preserves the inequality, yielding the stated bound.
\end{proof}

\section{Exploration and Exploitation Bounds}
\label{app:decomp}

\begin{lemma}[Exploration bound]\label{app:lem:explore}
During the exploration phase,
\[
\sum_{i=1}^{\eta} m\big(\alpha f(\OPT)-\E[f(S_i)]\big)
\le \eta m.
\]
\end{lemma}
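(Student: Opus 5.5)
The bound follows from a per-round argument combined with linearity of expectation. The statement is the appendix restatement of Lemma~\ref{lem:explore}: the total expected exploration regret, $\sum_{i=1}^{\eta} m(\alpha f(\OPT)-\E[f(S_i)])$, is at most $\eta m$.

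First, bound the per-round gap. By the problem statement, every realized welfare satisfies $f_t(A)\in[0,1]$, so every mean $f(A)=\E[f_t(A)]$ lies in $[0,1]$. In particular $f(\OPT)\le 1$. Since $\alpha\in(0,1]$ (here $\alpha=1-1/e$), we get $\alpha f(\OPT)\le 1$. On the other side, $f(S_i)\ge 0$ for every explored action, because each $w_i$ is nonnegative with $w_i(\emptyset)=0$. If $S_i$ is itself random (it is chosen adaptively by $\mathcal{A}{lg}$ from earlier empirical means), then $\E[f(S_i)]\ge 0$ still holds by taking expectations of a nonnegative quantity. Hence each single play contributes
\[
\alpha f(\OPT)-\E[f(S_i)] \le 1 - 0 = 1 .
\]

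Second, account for the plays. By construction of Algorithm~\ref{alg:bi-criteria-algo}, the exploration phase consists of exactly $\eta$ queried actions $S_1,\dots,S_\eta$, each played $m$ times. Summing the per-play bound over the $m$ plays of $S_i$ gives at most $m$. Summing over $i=1,\dots,\eta$ gives at most $\eta m$, which is the claim.

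The only point needing care is the adaptivity of $S_i$, and this is handled by the fact that the per-play bound holds pointwise for every realization. Bounding each term inside the expectation before summing avoids any independence requirement. I will also note that the left-hand side may be negative term by term, which is harmless since we only need an upper bound.
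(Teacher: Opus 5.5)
Your proof is correct and follows the paper's argument. Like the paper, you bound each per-play gap by $\alpha f(\OPT)-\E[f(S_i)]\le 1$ using $f\in[0,1]$ and $\alpha\le 1$, then sum over the $m$ plays of each of the $\eta$ explored actions; your remark on the adaptivity of $S_i$ is a harmless addition that the paper leaves implicit.
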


\begin{proof}
For any action $A$, since rewards are bounded in $[0,1]$,
\[
0 \le f(A) \le 1
\quad\text{and}\quad
0 \le f(\OPT) \le 1.
\]
Therefore,
\[
\alpha f(\OPT)-\E[f(A)] \le \alpha \le 1.
\]
Each explored action $S_i$ is played exactly $m$ times, contributing at
most $m$ regret. Summing over all $\eta$ explored actions yields the claim.
\end{proof}

\begin{lemma}[Exploitation bound under $\mathcal{E}$]\label{app:lem:exploit}
Conditioned on the clean event $\mathcal{E}$,
the exploitation action $S$ satisfies
\[
\sum_{t=T_\eta+1}^{T}
\big(\alpha f(\OPT)-\E[f(S)]\big)
\le T\,\delta C\,\mathrm{rad}.
\]
\end{lemma}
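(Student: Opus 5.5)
The approach is to reduce the claim to a per-round statement and then sum over the exploitation rounds. The exploitation action $S$ is fixed once the exploration phase ends. On $\mathcal{E}$, every action $S_1,\dots,S_\eta$ queried by the offline subroutine has empirical mean $\bar f(S_j)$ within $C\,\mathrm{rad}$ of its true mean. In the aggregated model this holds because $\bar f(S_j)=\sum_i \bar w_i(s_i)$ combines $M$ agentwise estimates through a $1$-Lipschitz map, so each agentwise error of at most $\mathrm{rad}$ contributes at most $C\,\mathrm{rad}$ in total. The empirical objective $\tilde f=\bar f$ therefore satisfies $\|\tilde f-f\|_\infty\le \epsilon$ on the evaluated actions, with $\epsilon=C\,\mathrm{rad}$.

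Next, apply Lemma~\ref{lem:robust-transfer}, which rests on the resilience guarantee of Theorem~\ref{thm:CG:robust} and Definition~\ref{def:robustness}, to the empirical objective with this $\epsilon$. This yields \eqref{eq:transfer-gap}:
\[
\alpha f(\OPT)-\E[f(S)]\le \delta\,C\,\mathrm{rad}.
\]
The expectation here is over the internal randomness of the subroutine, such as pipage rounding, with the exploration data held fixed on $\mathcal{E}$. A subtlety is that Theorem~\ref{thm:CG:robust} requires $\epsilon\le 1/(MN)^2$. I would handle this by noting that for $T$ large enough with $m=\lceil m^\star\rceil$, $C\,\mathrm{rad}$ falls below that threshold. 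Alternatively, I would observe that when the threshold fails, the trivial bound $\alpha f(\OPT)-\E[f(S)]\le 1$ is dominated by the claimed bound anyway, since in that regime $\delta C\,\mathrm{rad}\ge (4MN+2M)/(MN)^2$ is already of order one up to constants. I would state this as a standing assumption rather than grind it out.

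Finally, the per-round gap is the same constant for every $t\in\{T_\eta+1,\dots,T\}$, because the same $S$ is played throughout. Summing gives at most $(T-T_\eta)\,\delta C\,\mathrm{rad}$. Since $T-T_\eta\le T$ and the per-round gap is nonnegative when bounded above in this way, the sum is at most $T\,\delta C\,\mathrm{rad}$. If some rounds have negative gap, the sum is only smaller, so enlarging the index range to $T$ terms is still valid when each term is bounded by the nonnegative quantity $\delta C\,\mathrm{rad}$.

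The main obstacle is justifying that conditioning on $\mathcal{E}$ does not distort the guarantee. Conditioning changes the distribution of the samples, but the robustness statement is deterministic in the inputs: it holds for every empirical objective within $\epsilon$ of $f$. It therefore holds pointwise on $\mathcal{E}$, and averaging over any conditional law of the exploration data preserves the inequality.
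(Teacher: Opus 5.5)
Your proposal follows the paper's proof: on $\mathcal{E}$ the empirical objective is within $C\,\mathrm{rad}$ of $f$ on every queried action, the robust-transfer lemma gives the per-round gap $\delta C\,\mathrm{rad}$ for the fixed exploitation action, and summing over at most $T$ rounds yields the bound. Your remarks on conditioning and on negative gaps are correct refinements the paper leaves implicit, and flagging the threshold $\epsilon\le 1/(MN)^2$ of Theorem~\ref{thm:CG:robust} is a plus, since the paper never checks it. However, your second fallback is wrong. When the threshold fails you only get $\delta C\,\mathrm{rad} > (4MN+2M)/(MN)^2=(4N+2)/(MN^2)$, which is $O(1/(MN))$ rather than order one. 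So only the large-$T$ standing assumption actually secures the stated inequality.
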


\begin{proof}
On the event $\mathcal{E}$, the empirical objective satisfies
$\|\tilde f-f\|_\infty \le C\,\mathrm{rad}$.
Applying Lemma~\ref{app:lem:transfer} with $\epsilon=C\,\mathrm{rad}$ gives
\[
\alpha f(\OPT)-\E[f(S)] \le \delta C\,\mathrm{rad}
\]
for each exploitation round.
Since the exploitation phase lasts for at most $T$ rounds, summing over
time yields the result.
\end{proof}

\section{Conditional Regret}
\label{app:conditional}

\begin{theorem}[Conditional regret]\label{app:thm:conditional}
Conditioned on the clean event $\mathcal{E}$, the expected $\alpha$-regret
satisfies
\begin{equation}\label{app:eq:cond-regret}
  \E[\mathcal{R}(T)\mid \mathcal{E}]\ \le\ \underbrace{\eta m}_{\text{exploration}} \ +\ \underbrace{T\,\delta\,C\,\sqrt{\frac{\log T}{2m}}}_{\text{exploitation}}.
\end{equation}
\end{theorem}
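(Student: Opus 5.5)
We prove the conditional bound \eqref{app:eq:cond-regret} by splitting the horizon at $T_\eta$, the round at which the last explored action completes its $m$ plays. Write
\[
\mathcal{R}(T)=\sum_{t=1}^{T_\eta}\big(\alpha f(\OPT)-f_t(S_t)\big)+\sum_{t=T_\eta+1}^{T}\big(\alpha f(\OPT)-f_t(S)\big),
\]
where $S$ is the committed exploitation action. We then take conditional expectations given $\mathcal{E}$ term by term, using linearity.

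\textbf{Exploration term.} By Algorithm~\ref{alg:bi-criteria-algo}, $T_\eta=\eta m$, and every round in this block plays some explored $S_i$. Each summand satisfies $\alpha f(\OPT)-f_t(S_t)\le\alpha\le 1$ pointwise, because $f_t\in[0,1]$ and $f(\OPT)\le 1$. The bound therefore holds surely and survives conditioning on $\mathcal{E}$. This is precisely Lemma~\ref{app:lem:explore}, and it gives at most $\eta m$.

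\textbf{Exploitation term.} The output $S$ is a function of the exploration-phase data and of $\mathcal{A}{lg}$'s internal randomness only. Fresh rewards in later rounds are i.i.d. and independent of that history, so for $t>T_\eta$ we have $\E[f_t(S)\mid \mathcal{E},S]=f(S)$. Hence the conditional expectation of the exploitation sum equals $\sum_{t>T_\eta}\big(\alpha f(\OPT)-\E[f(S)\mid\mathcal{E}]\big)$. On $\mathcal{E}$, every value $\bar f$ returned to $\mathcal{A}{lg}$ is within $C\,\mathrm{rad}$ of the truth. Lemma~\ref{app:lem:transfer} (equivalently Lemma~\ref{app:lem:exploit}), applied with $\epsilon=C\,\mathrm{rad}$, then bounds each summand by $\delta C\,\mathrm{rad}$. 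There are at most $T$ such rounds, and substituting $\mathrm{rad}=\sqrt{\log T/(2m)}$ from \eqref{eq:rad} gives the second term. Adding the two terms yields \eqref{app:eq:cond-regret}.

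\textbf{Main obstacle.} The delicate point is the exploitation step. Conditioning on $\mathcal{E}$ alters the distribution of the exploration samples, so we must check that the resilience guarantee still applies conditionally. We handle this by viewing the guarantee of Definition~\ref{def:robustness} as a statement that holds for every fixed approximate oracle $\hat f$ with $\|\hat f-f\|_\infty\le\epsilon$, with the expectation taken only over $\mathcal{A}{lg}$'s own randomness. This randomness is independent of the bandit samples, so the bound holds conditionally on any realization of the data in $\mathcal{E}$, and integrating over such realizations preserves it.

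A second subtlety is that $\mathcal{E}$ controls the estimates only on the $\eta$ actions actually queried, whereas Definition~\ref{def:robustness} is stated for all $S$. We would note that $\mathcal{A}{lg}$ only ever sees the values it queries. One can therefore extend $\hat f$ arbitrarily within $\epsilon$ elsewhere, for instance by setting $\hat f=f$ off the queried set, without changing the algorithm's behavior, and then apply the definition to this extension.
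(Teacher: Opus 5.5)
Your proposal is correct and is essentially the paper's proof: split at $T_\eta=\eta m$, bound the exploration block by $\eta m$ via Lemma~\ref{app:lem:explore}, bound each of the at most $T$ exploitation rounds by $\delta C\,\mathrm{rad}$ via Lemma~\ref{app:lem:exploit} (i.e.\ Lemma~\ref{app:lem:transfer} with $\epsilon=C\,\mathrm{rad}$), and substitute $\mathrm{rad}=\sqrt{\log T/(2m)}$. Your `main obstacle' paragraph is more careful than the paper, which simply asserts that the resilience inequality holds for every realization of $\mathcal{A}{lg}$'s randomness; however, for a randomized $\mathcal{A}{lg}$ (such as Monte Carlo continuous greedy) both $\mathcal{E}$ and the set of queried actions depend on its coins, so neither ``conditioning on data in $\mathcal{E}$'' nor ``setting $\hat f=f$ off the queried set'' gives a coin-independent oracle, and the rigorous fix is instead to replace $\bar f$ by $f$ on the data-determined set of actions with inaccurate estimates, observe that $\mathcal{A}{lg}$ runs identically on both oracles on $\mathcal{E}$, and pay an extra $\Pr(\bar{\mathcal{E}})\le 2\eta/T$ per exploitation round, which the slack $(1-\alpha)\eta m$ in the exploration term absorbs once $m\ge 2/(1-\alpha)$.
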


\begin{proof}
Recall that the $\alpha$-regret is defined as
\[
\mathcal{R}(T)
=
\sum_{t=1}^T \bigl(\alpha f(\OPT)-f_t(S_t)\bigr).
\]
Conditioning on the event $\mathcal{E}$ and using linearity of expectation,
we have
\begin{equation}
\label{eq:cond:lin}
\E[\mathcal{R}(T)\mid\mathcal{E}]
=
\sum_{t=1}^T
\E\!\left[\alpha f(\OPT)-f_t(S_t)\mid\mathcal{E}\right].
\end{equation}

Let $T_\eta=\eta m$ denote the last round of the exploration phase.
We decompose the sum in \eqref{eq:cond:lin} into two disjoint intervals,
corresponding to exploration and exploitation:
\begin{equation}
\label{eq:cond:decomp}
\E[\mathcal{R}(T)\mid\mathcal{E}]
=
\sum_{t=1}^{T_\eta}
\E\!\left[\alpha f(\OPT)-f_t(S_t)\mid\mathcal{E}\right]
+
\sum_{t=T_\eta+1}^{T}
\E\!\left[\alpha f(\OPT)-f_t(S_t)\mid\mathcal{E}\right].
\end{equation}

We first bound the exploration contribution.
During exploration, each of the $\eta$ actions $S_1,\dots,S_\eta$ is played
exactly $m$ times.
By Lemma~\ref{app:lem:explore}, for each explored action $S_i$,
\[
\sum_{t:\,S_t=S_i}
\E\!\left[\alpha f(\OPT)-f_t(S_t)\mid\mathcal{E}\right]
\le m.
\]
Summing over all $i\in[\eta]$ yields
\begin{equation}
\label{eq:cond:explore}
\sum_{t=1}^{T_\eta}
\E\!\left[\alpha f(\OPT)-f_t(S_t)\mid\mathcal{E}\right]
\le \eta m.
\end{equation}

We next bound the exploitation contribution.
Let $S$ denote the action selected by the offline subroutine at the end of
the exploration phase.
By Lemma~\ref{app:lem:exploit}, conditioned on $\mathcal{E}$, the per-round
expected regret satisfies
\[
\E\!\left[\alpha f(\OPT)-f_t(S)\mid\mathcal{E}\right]
\le \delta C\,\mathrm{rad}
\quad\text{for all } t>T_\eta.
\]
Since the exploitation phase lasts for at most $T-T_\eta\le T$ rounds, we
obtain
\begin{equation}
\label{eq:cond:exploit}
\sum_{t=T_\eta+1}^{T}
\E\!\left[\alpha f(\OPT)-f_t(S)\mid\mathcal{E}\right]
\le T\,\delta C\,\mathrm{rad}.
\end{equation}

Combining \eqref{eq:cond:explore} and \eqref{eq:cond:exploit} with the
decomposition in \eqref{eq:cond:decomp}, we conclude that
\[
\E[\mathcal{R}(T)\mid\mathcal{E}]
\le
\eta m
+
T\,\delta C\,\mathrm{rad}.
\]
Substituting $\mathrm{rad}=\sqrt{\log T/(2m)}$ completes the proof.
\end{proof}

\section{Optimizing the Exploration Length}
\label{app:optm}

Define the function $g:(0,\infty)\to\mathbb{R}$ by
\[
g(m)
:= \eta m
+ T\delta C\sqrt{\frac{\log T}{2}}\,m^{-1/2}.
\]

\begin{proposition}[Optimal exploration length]\label{app:prp:mstar}
The function $g$ admits a unique global minimizer on $(0,\infty)$ given by
\begin{equation}\label{app:eq:mstar}
m^\star
=\left(\frac{T\delta C}{2\eta}\sqrt{\frac{\log T}{2}}\right)^{2/3}.
\end{equation}
\end{proposition}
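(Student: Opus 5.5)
To prove \cref{app:prp:mstar}, I will treat $g$ as a one-variable calculus problem on $(0,\infty)$. To lighten notation, set $a:=\eta>0$ and $b:=T\delta C\sqrt{\log T/2}$, so that $g(m)=am+bm^{-1/2}$. I assume $T\ge 2$, which gives $\log T>0$ and hence $b>0$. This is also the only regime in which the statement is meaningful: if $b=0$ then $g$ is strictly increasing and has no minimizer on the open interval.

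\textbf{Step 1 (convexity).} Differentiate twice:
\[
g'(m)=a-\tfrac12 b\,m^{-3/2},\qquad g''(m)=\tfrac34 b\,m^{-5/2}>0 .
\]
So $g$ is strictly convex on $(0,\infty)$. A strictly convex differentiable function has at most one critical point, and any such point is the unique global minimizer.

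\textbf{Step 2 (existence and formula).} Solving $g'(m)=0$ gives $m^{3/2}=b/(2a)$, i.e.
\[
m^\star=\Bigl(\frac{b}{2a}\Bigr)^{2/3}=\Bigl(\frac{T\delta C}{2\eta}\sqrt{\frac{\log T}{2}}\Bigr)^{2/3},
\]
which is exactly \eqref{app:eq:mstar}. For extra robustness I will also note the limiting behaviour: $g(m)\to\infty$ both as $m\to0^+$ (the $m^{-1/2}$ term blows up) and as $m\to\infty$ (the linear term dominates). This confirms that the critical point is an interior global minimum rather than a boundary effect.

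\textbf{Step 3 (value at the optimum).} For use in \cref{cor:cond}, I will record the minimum value. At $m^\star$ we have $b\,(m^\star)^{-1/2}=2a\,m^\star$, so $g(m^\star)=3a m^\star=3\eta^{1/3}(b/2)^{2/3}$. This is $\mathcal{O}(\delta^{2/3}\eta^{1/3}C^{2/3}T^{2/3}\log^{1/3}T)$.

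The main obstacle is mild: the algorithm actually uses $\lceil m^\star\rceil$, so the rounding needs separate handling. Since $\lceil m^\star\rceil\le m^\star+1$ and $m^{-1/2}$ is decreasing, we get $g(\lceil m^\star\rceil)\le g(m^\star)+\eta$. The extra $\eta$ is lower order whenever $m^\star\ge1$, which is the case when $T$ is large relative to $\eta$.
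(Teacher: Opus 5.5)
Your proof is correct and follows the same route as the paper: compute $g'$, solve $g'(m)=0$ to get $m^{3/2}=\frac{T\delta C}{2\eta}\sqrt{\log T/2}$, and use $g''>0$ (strict convexity) to conclude that this critical point is the unique global minimizer. Your assumption $T\ge 2$ (so that $b>0$) states explicitly a condition the paper uses without comment when it asserts $g''>0$; your value computation $g(m^\star)=3\eta m^\star$ and the ceiling estimate $g(\lceil m^\star\rceil)\le g(m^\star)+\eta$ are correct extras that go beyond what the statement requires.
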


\begin{proof}
Since $g$ is continuously differentiable on $(0,\infty)$, any global
minimizer must satisfy the first-order optimality condition.
Computing the derivative, we obtain
\[
g'(m)
= \eta
- \frac{1}{2}T\delta C\sqrt{\frac{\log T}{2}}\,m^{-3/2}.
\]
Setting $g'(m)=0$ and rearranging terms yields
\[
m^{3/2}
= \frac{T\delta C}{2\eta}\sqrt{\frac{\log T}{2}},
\]
which has the unique solution
\[
m^\star
=\left(\frac{T\delta C}{2\eta}\sqrt{\frac{\log T}{2}}\right)^{2/3}.
\]

To establish global optimality, we compute the second derivative,
\[
g''(m)
= \frac{3}{4}T\delta C\sqrt{\frac{\log T}{2}}\,m^{-5/2}.
\]
Since $g''(m)>0$ for all $m>0$, the function $g$ is strictly convex on
$(0,\infty)$. Therefore, the stationary point $m^\star$ is the unique
global minimizer of $g$.
\end{proof}

\section{Proof of the MA-CMAB Regret Bound for SWP}
\label{app:final}

\begin{theorem}[MA-CMAB Regret for SWP]\label{app:thm:final}
With $m=\lceil m^\star\rceil$,
\[
\E[\mathcal{R}(T)]
=\mathcal{O}\!\left(
\delta^{2/3}\eta^{1/3}C^{2/3}T^{2/3}\log(T)^{1/3}
\right).
\]
\end{theorem}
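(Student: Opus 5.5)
The plan is to split the expectation by the law of total expectation over the clean event $\mathcal{E}$ of Definition~\ref{def:clean}:
\[
\E[\mathcal{R}(T)] = \E[\mathcal{R}(T)\mid\mathcal{E}]\,\Prb(\mathcal{E}) + \E[\mathcal{R}(T)\mid\bar{\mathcal{E}}]\,\Prb(\bar{\mathcal{E}}).
\]

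\textbf{Clean event.} We use $\Prb(\mathcal{E})\le 1$. Corollary~\ref{cor:cond} (that is, Theorem~\ref{thm:conditional} evaluated at $m=\lceil m^\star\rceil$ from Proposition~\ref{prp:mstar}) then bounds the first term by $\mathcal{O}(\delta^{2/3}\eta^{1/3}C^{2/3}T^{2/3}\log(T)^{1/3})$.

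The rounding $\lceil m^\star\rceil\le m^\star+1$ needs one check. It adds at most $\eta$ to the exploration cost $\eta m$. The exploitation term $T\delta C\sqrt{\log T/(2m)}$ is decreasing in $m$, so it does not grow. The extra $\eta$ is lower order whenever $m^\star\ge 1$, i.e.\ for $T$ large relative to $\eta/(\delta C)$. Small $T$ is covered by the trivial bound $\mathcal{R}(T)\le T$, which is itself dominated by the target in that regime since $T\le(\eta/\delta C)^{2}$ up to logs.

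\textbf{Unclean event.} Rewards lie in $[0,1]$, so every round contributes regret at most $\alpha f(\OPT)\le 1$, giving $\E[\mathcal{R}(T)\mid\bar{\mathcal{E}}]\le T$. By Lemma~\ref{lem:clean}, $\Prb(\bar{\mathcal{E}})\le 2\eta/T$. The second term is therefore at most $2\eta$.

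\textbf{Main obstacle.} Showing that $2\eta$ is absorbed into the main term is where care is needed, because $\eta=(MN)^8$ can be large. We have
\[
\eta \le \delta^{2/3}\eta^{1/3}C^{2/3}T^{2/3}\log(T)^{1/3}
\quad\Longleftrightarrow\quad
\eta^{2/3}\le \delta^{2/3}C^{2/3}T^{2/3}\log(T)^{1/3}.
\]
This holds whenever $T\ge \eta/(\delta C)$, which is exactly the same condition as $m^\star\gtrsim 1$ above. I would state this as the nontrivial regime. Otherwise the claimed bound exceeds $T$ (up to constants and logs), and the trivial bound $\mathcal{R}(T)\le T$ suffices.

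The exploitation action $S$ is random through the randomness of $\mathcal{A}lg$ and pipage rounding. Lemma~\ref{lem:robust-transfer} already bounds $\E[f(S)]$ over that randomness, so conditioning on $\mathcal{E}$, which depends only on the exploration samples, is consistent.

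Adding the two terms gives $\E[\mathcal{R}(T)]\le \mathcal{O}(\delta^{2/3}\eta^{1/3}C^{2/3}T^{2/3}\log(T)^{1/3}) + 2\eta$. The second summand is absorbed by the argument above, which yields the claim.
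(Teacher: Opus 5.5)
Your proposal is correct and follows the paper's proof. Both use total expectation over $\mathcal{E}$, the conditional bound of Theorem~\ref{thm:conditional} at $m=\lceil m^\star\rceil$ on the clean event, and $T\cdot 2\eta/T=2\eta$ on the unclean event. Your regime split ($T\ge \eta/(\delta C)$, versus the trivial bound $\mathcal{R}(T)\le T$ otherwise) absorbs $2\eta$ and the ceiling's extra $\eta$ more rigorously than the paper's appeal to ``polynomial horizons''. One slip: in the small-$T$ case the justification should be $T\lesssim \eta/(\delta C)\le(\delta C)^2\eta$ (using $\delta C\ge 1$), which gives $T\le \delta^2C^2\eta\log T$, i.e.\ $T$ is below the target, rather than $T\le(\eta/\delta C)^2$.
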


\begin{proof}
Let $\mathcal{E}$ denote the clean event defined in
Lemma~\ref{app:lem:clean}.
By the law of total expectation,
\begin{equation}
\label{eq:final:total}
\E[\mathcal{R}(T)]
=
\E[\mathcal{R}(T)\mid\mathcal{E}]\Pr(\mathcal{E})
+
\E[\mathcal{R}(T)\mid\bar{\mathcal{E}}]\Pr(\bar{\mathcal{E}}).
\end{equation}

We first bound the contribution of the clean event.
By Theorem~\ref{app:thm:conditional}, for any integer $m\ge 1$,
\begin{equation}
\label{eq:final:conditional}
\E[\mathcal{R}(T)\mid\mathcal{E}]
\le
\eta m
+
T\delta C\sqrt{\frac{\log T}{2m}}.
\end{equation}
Choosing $m=\lceil m^\star\rceil$, where $m^\star$ is defined in
Lemma~\ref{app:prp:mstar}, and substituting into
\eqref{eq:final:conditional} yields
\begin{equation}
\label{eq:final:clean}
\E[\mathcal{R}(T)\mid\mathcal{E}]
=
\mathcal{O}\!\left(
\delta^{2/3}\eta^{1/3}C^{2/3}T^{2/3}\log(T)^{1/3}
\right).
\end{equation}

We next bound the contribution of the failure event $\bar{\mathcal{E}}$.
Since rewards are bounded in $[0,1]$, the instantaneous regret
$\alpha f(\OPT)-f_t(S_t)$ is at most $1$ for every round $t$.
Therefore,
\begin{equation}
\label{eq:final:bad-regret}
\mathcal{R}(T) \le T
\quad \text{almost surely on } \bar{\mathcal{E}},
\end{equation}
which implies
\[
\E[\mathcal{R}(T)\mid\bar{\mathcal{E}}] \le T.
\]
Moreover, Lemma~\ref{app:lem:clean} gives
\begin{equation}
\label{eq:final:bad-prob}
\Pr(\bar{\mathcal{E}}) \le \frac{2\eta}{T}.
\end{equation}
Combining these two bounds, we obtain
\begin{equation}
\label{eq:final:bad}
\E[\mathcal{R}(T)\mid\bar{\mathcal{E}}]\Pr(\bar{\mathcal{E}})
\le
T \cdot \frac{2\eta}{T}
=
2\eta.
\end{equation}

Substituting \eqref{eq:final:clean} and \eqref{eq:final:bad} into
\eqref{eq:final:total} yields
\[
\E[\mathcal{R}(T)]
\le
\mathcal{O}\!\left(
\delta^{2/3}\eta^{1/3}C^{2/3}T^{2/3}\log(T)^{1/3}
\right)
+
2\eta.
\]
For polynomial horizons $T$ and polynomially bounded $\eta$,
the additive term $2\eta$ is asymptotically dominated by the leading term.
Absorbing constants completes the proof.
\end{proof}


\section{Application: Stochastic Combinatorial Auctions}
\label{app:auction}

We illustrate how stochastic combinatorial auctions fit naturally into the
multi-agent combinatorial bandit (MA-CMAB) framework.
Following the classical submodular welfare formulation
\citep{vondrak2008optimal,feige2010submodular}, consider $M$ bidders (agents) and
$N$ indivisible items.
Each bidder $i$ is endowed with a monotone submodular valuation function
$w_i:2^{[N]}\to[0,1]$ over item bundles.

In each round, the auctioneer selects a feasible allocation, represented as a
partition of items among bidders.
The auctioneer observes only the realized social welfare,
$f_t(S)=\sum_{i=1}^M w_i(s_i)$, where $s_i$ denotes the bundle assigned to bidder
$i$.
Individual bidder utilities or marginal values are not revealed.

When bidder valuations are accessible only through noisy value or demand
queries, each query returns a stochastic estimate of the true valuation.
This induces a full-bandit feedback model, where the auctioneer must learn
near-optimal allocations solely from aggregate welfare observations.
Accordingly, each bidder corresponds to an agent with a stochastic submodular
reward function, and feasible allocations correspond to combinatorial actions
subject to partition constraints.

This setting is therefore an instance of the MA-CMAB problem studied in this
paper, with regret measured relative to the optimal offline welfare benchmark.
Our regret bounds quantify the learning cost incurred by the auctioneer when
optimizing social welfare under noisy, bandit-style valuation access.

\end{document}